\documentclass[authoryear,3p]{elsarticle}

\usepackage[utf8]{inputenc}
\usepackage{amsmath}
\usepackage{amsfonts}
\usepackage{amssymb}
\usepackage{amsfonts}
\usepackage{amsthm}
\usepackage{amscd}
\usepackage{color}
\usepackage{graphics}
\usepackage{graphicx}
\usepackage[toc,page]{appendix}
\usepackage{todonotes}

\usepackage[all]{xy}

\usepackage{algorithm,algorithmic}



\usepackage{tikz}

\usepackage{natbib}

\newtheorem{theorem}{\bf Theorem}
\newtheorem{lemma}[theorem]{\bf Lemma}
\newtheorem{proposition}[theorem]{\bf Proposition}

\newtheorem{remark}[theorem]{\bf Remark}

\newtheorem{example}[theorem]{\bf Example}

\newcommand{\av}[1]{\ensuremath{#1}}

\newcommand{\f}[1]{\mathbf{#1}}

\newcommand{\Q}{\mathbb{Q}}
\newcommand{\C}{\mathbb{C}}
\newcommand{\R}{\mathbb{R}}
\newcommand{\K}{\mathbb{K}}
\newcommand{\PP}{\mathbb{P}}
\newcommand{\p}{\mathbb{P}}
\newcommand{\E}{\mathbb{E}}
\newcommand{\A}{\mathbb{A}}
\newcommand{\Z}{\mathbb{Z}}
\newcommand{\G}{\mathbb{G}}

\newcommand{\im}{\mathrm{Im}\,}
\newcommand{\re}{\mathrm{Re}\,}
\newcommand{\aut}{\mathrm{Aut}}

\newcommand{\aff}{\mathrm{Aff}}

\newcommand{\simm}{\mathrm{Sim}}

\newcommand{\GL}{GL}

\newcommand{\I}{\mathrm{i}}
\newcommand{\e}{\mathrm{e}}

\setlength{\topmargin}{-1 in}%
\addtolength{\topmargin}{22mm}%
\setlength{\oddsidemargin}{-1in }%
\addtolength{\oddsidemargin}{3,5cm}%

\setlength{\textwidth}{150mm}%
\setlength{\textheight}{230mm}

\parindent=0pt
\parskip = 5pt plus 2pt minus 2pt


\begin{document}
\sloppy

\begin{frontmatter}

\title{Computing projective equivalences of special algebraic varieties}

\author[plzen2]{Michal Bizzarri}
\ead{bizzarri@ntis.zcu.cz}

\author[plzen1,plzen2]{Miroslav L\'{a}vi\v{c}ka}
\ead{lavicka@kma.zcu.cz}

\author[plzen1,plzen2]{Jan Vr\v{s}ek\corref{cor1}}
\cortext[cor1]{Corresponding author}
\ead{vrsekjan@kma.zcu.cz}

\address[plzen1]{Department of Mathematics, Faculty of Applied Sciences, University of West Bohemia,
         Univerzitn\'i~8,~306~14~Plze\v{n},~Czech~Republic}

\address[plzen2]{NTIS -- New Technologies for the Information Society, Faculty of Applied Sciences, University of West Bohemia, Univerzitn\'i 8, 306 14 Plze\v{n}, Czech~Republic}

\begin{abstract}
This paper is devoted to the investigation of selected situations when the computation of projective (and other) equivalences of  algebraic varieties  can be efficiently solved with the help of finding projective equivalences of finite sets on the projective line.  In particular, we design a unifying approach that finds for two  algebraic varieties $X,Y$ from special classes an associated set of automorphisms of the projective line (the so called good candidate set) consisting of candidates for the construction of  possible mappings $X\rightarrow Y$. The functionality of the designed method is presented on computing  projective equivalences of rational curves, on determining projective equivalences of rational ruled surfaces, on the detection of affine transformations between planar curves, and on computing similarities between two implicitly given algebraic surfaces. When possible, symmetries of given shapes are also discussed as special cases.

\end{abstract}

\begin{keyword}
Projective transformation, symmetry, rational curve, rational ruled surface, algebraic surface
\end{keyword}

\end{frontmatter}




%
%
\section{Introduction}

Projective (or affine, or similar, or isometric) equivalencies and symmetries of geometric shapes is a fundamental concept in nature, science, engineering, architecture, etc. For instance, symmetries in the natural world has inspired people to integrate symmetry when designing tools, buildings, or artwork. Many biochemical processes are governed by symmetries. Hence, detecting a suitable class of equivalencies in  given geometric data is a problem in geometry processing that has attracted attention of researchers from different scientific areas for many years.  Numerous algorithms have been formulated to identify, extract, encode, and apply geometric equivalencies or symmetries and numerous applications immediately benefit from gained equivalency or symmetry information. Nowadays, geometric equivalencies and symmetries play a significant role also  in computer graphics, computer vision, or in pattern recognition.

In short, the main goal is to decide whether two given geometric shapes are related by a suitable transformation and in the affirmative case to detect all such equivalences. In many applications, it is sufficiently enough to find approximate equivalencies and symmetries of the given shapes, only. For this, one can identify mainly the following two practical reasons  -- first, the input shape is approximate (it is an simplified model of some real object), or second, computations cannot be provided exactly (solving of complicated systems of non-linear equations). However, one question still remains, i.e., how to solve the problem exactly at least for some special algebraic varieties.

This problem has become an active research area especially in recent years and one can find several papers focused on the detection and computation of symmetries and some equivalences of curves, see e.g. \cite{HuCo94,BrKn04,LeRG08,Le09}, or series of papers  \cite{Al14,AlHeMu14,AlHeMu14b,AlHeMu15,AlHeMu18}. The first paper devoted to the broadest group, i.e., to the general projective equivalences, has appeared quite recently, see \cite{HaJu18}. In this paper, the authors study equivalences of curves with respect to the projective group in an arbitrary space dimension. The formulated symbolic-numerical algorithm (based on  Gr\"{o}bner bases computation) is universal and provides good computational results for all presented examples with coefficients from $\mathbb{Q}$.  Later, the problem of deterministically computing the symmetries of a planar implicitly given  curve and the problem of deterministically checking whether or not two implicitly given, planar algebraic curves are similar, i.e., equal up to a similarity transformation, was investigated in \cite{AlLaVr18}. Nonetheless, solving this problem for surfaces in 3-space, at least for special classes, remains still an open question which deserves further research.

The paper is organized as follows. Section~$2$ recalls some basic facts concerning projective (and other) equivalences, finite rotation groups and  Grassmannians.  As the paper is focused on studying suitable situations that can be reduced to the computation of equivalences of finite sets on the projective line, we present in Section~3 two algorithms devoted to the detection of equivalences of finite point sets using cross-ratios and to the detection of equivalences of finite sets given by a polynomial relation. The formulated computational method is then used in  Section~$4$ on computing projective (and other) equivalences of several types of algebraic varieties, in particular on projective equivalences of rational curves, on projective equivalences of rational ruled surfaces, on the detection of affine transformations mapping a planar curve to another planar curve and on computing similarities between two implicitly given algebraic surfaces. The functionality of the designed unifying approach  is documented  on several examples. Finally, we conclude the paper in Section~5. 

\section{Preliminaries}\label{sec:prelim}

First we recall some fundamental facts and notions  whose knowledge is anticipated in the following sections.

\subsection{Projective transformations}

Recall that the \emph{projective space} $\PP^n_{\K}$ of dimension $n$ over the field $\K$ is the set of all lines through the origin in $\K^{n+1}$. It can be interpreted as the quotient $\K^{n+1} \setminus \{\f 0\} / \! \sim$, where $\sim$ denotes the equivalence relation of points lying on the same line going through the origin:
\begin{equation}
(x_0,\ldots,x_n) \sim (y_0,\ldots,y_n) \quad \mathrm{iff} \quad (x_0,\ldots,x_n) = \lambda (y_0,\ldots,y_n) \quad \mathrm{for} \; \mathrm{some} \; \lambda \in \K^*,
\end{equation}
where $\K^* = \K \setminus \{0\}$. Hence a point  in the projective space can be considered as an equivalence class in $\K^{n+1}$ split by $\sim$. To distinguish the class from its representative we will write colons between the coordinates and the square brackets instead of the round brackets, i.e., $\f x=[x_0:x_1:\cdots:x_n]$. Throughout the paper we will work mainly over the field of complex numbers, in which case we will write $\p^n$ instead of $\p^n_\C$.

Fixing a hyperplane $\omega: x_0=0$ as a hyperplane at infinity, or an \emph{ideal hyperplane}, we obtain the \emph{affine space} $\A^n_\K$ embedded into $\PP^n_\K$ via $(x_1,\ldots,x_n)\mapsto[1:x_1:\cdots:x_n]$. Conversely a point  $\f x=[x_0:\cdots:x_n] \in \PP^n \setminus \omega=\A^n$ has the affine coordinates $x=(x_1/x_0,\ldots,x_n/x_0)$. The oval quadric $\Omega:\,  x_0=x_1^2+\ldots+x_n^2=0$ lying in $\omega\subset\p^n_\R$ is called the \emph{absolute quadric}. It may induce a metric in the affine space which then becomes the \emph{Euclidean space} $\E^n_\R$. Note that $\Omega$ consists of imaginary points only.


Let us denote $\aut(\PP^n_\K)$ the set of all \emph{projective transformations} of $\PP^n_\K$, i.e., $\aut(\PP^n_\K)\cong\PP \GL_{n+1}(\K)$.  A projective transformation mapping the hyperplane at infinity $\omega$ onto itself and thus also $\mathbb{A}^n_\K$ onto itself is called an \emph{affine transformation}. The set of all affine transformations forms a group denoted by $\aff_n(\K)$. Since we will be interested in $\aff_2(\C)$ only we will write simply $\aff$ for this group without any danger of confusion.   A \emph{similarity} is an affine transformation in the Euclidean space $\mathbb{E}^n_\R$ which preserves the absolute quadric. The group of direct similarities is  generally denoted $\simm_n(\R)$. Let us remark that analogously to the affine case we will deal with $\simm:=\simm_3(\R)$ only. Finally,  \emph{isometries} are similarities which preserve distances. 

The matrix representation of any transformation of $\p^n_\K$ can be written in the form
\begin{equation}\label{eq:matrix form}
  \overline{\f A} = \left(\begin{array}{lclll}
    a_{00}&\vline&a_{01}&\cdots&a_{0n}\\
    \hline
    a_{10}&\vline&a_{11}&\cdots&a_{1n}\\
    a_{20}&\vline&\vdots&\ddots&\vdots\\
    a_{n0}&\vline&a_{n1}&\cdots&a_{nn}
  \end{array}\right)
  =
  \left(\begin{array}{lcl}
    a_{00}&\vline&\widehat{\f a}\\
    \hline
    \f a&\vline& \f A
  \end{array}\right).
\end{equation}
The affine transformations correspond to the case $a_{00} \neq 0$ and $\widehat{\f a}=(0,\ldots,0)$.  The additional assumption on the transformation to be a similarity is fulfilled by the condition $\f A^{\top} \f A= \lambda \f I$, where $\lambda \in \R^{>0}$, and especially for $\lambda=1$ we obtain an isometry.


Let $\cal G$ be a subgroup of $\aut(\p^n)$ and let $A,B\subset\p^n$. We will write
\begin{equation}
   {\cal G}_{A,B}:=\left\{\phi\in {\cal G}:\  \phi(A)=B\right\}
\end{equation}
for the set of equivalences between $A$ and $B$. In the case $B=A$ the set ${\cal G}_{A,A}$ forms a group and we will denote it ${\cal G}_A$.

\subsection{$\aut(\p^1)$ and its finite subgroups}
As known, any projective transformation $\PP^n\rightarrow\PP^n$ is uniquely determined specifying $n+2$ pairs of points in a general position. In particular, a projective automorphism of $\PP^1$ is specified by three points and their images. Thus for an ordered quadruple  $\{\f a_1,\dots, \f a_4\}$ there exists the~unique $\phi\in\aut(\p^1)$ such that $\phi(\f a_1)=[1:1]$, $\phi(\f a_2)=[0:1]$ and $\phi(\f a_3)=[1:0]$. If we write $\phi(\f a_4)=[s:t]$ then the~\emph{cross-ratio} of the quadruple is defined to be a number
\begin{equation}\label{eq:def cross}
  [\f a_1,\f a_2;\f a_3, \f a_4] =\frac{t}{s}.
\end{equation}

\begin{proposition}\label{prp:cross}
Two ordered quadruples $\{\f a_1,\dots, \f a_4\}$ and $\{\f b_1,\dots, \f b_4\}$ in $\PP^1$ are projectively equivalent if and only if $[\f a_1,\f a_2;\f a_3, \f a_4]=[\f b_1,\f b_2;\f b_3, \f b_4]$.
\end{proposition}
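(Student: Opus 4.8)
The plan is to reduce everything to the uniqueness recalled just above, namely that an element of $\aut(\p^1)$ is pinned down by the images of three points in general position (on $\p^1$ this simply means three distinct points). For each quadruple I would introduce the normalizing map attached to its first three entries: let $\phi_a\in\aut(\p^1)$ be the unique transformation with $\phi_a(\f a_1)=[1:1]$, $\phi_a(\f a_2)=[0:1]$, $\phi_a(\f a_3)=[1:0]$, and let $\phi_b$ be its analogue for $\{\f b_1,\dots,\f b_4\}$. Writing $\phi_a(\f a_4)=[s_a:t_a]$ and $\phi_b(\f b_4)=[s_b:t_b]$, the definition \eqref{eq:def cross} makes the two cross-ratios equal to $t_a/s_a$ and $t_b/s_b$ respectively.

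For the forward direction I would start from a $\psi\in\aut(\p^1)$ with $\psi(\f a_i)=\f b_i$ and observe that $\phi_b\circ\psi$ carries $\f a_1,\f a_2,\f a_3$ to $[1:1],[0:1],[1:0]$ exactly as $\phi_a$ does; uniqueness then forces $\phi_b\circ\psi=\phi_a$, and evaluating at $\f a_4$ yields $\phi_a(\f a_4)=\phi_b(\f b_4)$, hence the equality of cross-ratios.

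For the converse I would run the same idea backwards: equality of the ratios $t_a/s_a=t_b/s_b$ means the homogeneous points coincide, $[s_a:t_a]=[s_b:t_b]$, i.e. $\phi_a(\f a_4)=\phi_b(\f b_4)$. Setting $\psi:=\phi_b^{-1}\circ\phi_a$ then produces a transformation sending $\f a_1,\f a_2,\f a_3$ to $\f b_1,\f b_2,\f b_3$ and, thanks to the preceding equality, also $\f a_4$ to $\f b_4$, so $\psi$ realizes the projective equivalence of the two quadruples.

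I expect no deep obstacle: the whole argument is bookkeeping on top of the uniqueness theorem. The only point demanding care is the passage between the scalar ratio $t/s$ appearing in the cross-ratio and the equality of the homogeneous points $[s:t]$ in $\p^1$, together with the standing assumption that the three normalizing points of each quadruple are distinct, so that $\phi_a$ and $\phi_b$ genuinely exist and are invertible.
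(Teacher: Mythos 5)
Your argument is correct and complete: both directions follow from the uniqueness of the element of $\aut(\p^1)$ prescribed on three distinct points, and you rightly flag the only delicate points (the equivalence between equality of the ratios $t/s$ and equality of the points $[s:t]$, and the distinctness of the first three entries so that $\phi_a$, $\phi_b$ exist). The paper states Proposition~\ref{prp:cross} as a classical fact without proof, so there is nothing to compare against; your normalization argument is exactly the standard one that the definition \eqref{eq:def cross} is set up to support.
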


The {\em Riemann sphere} $\widehat{\C}$ is the set $\C \cup \{ \infty \}$, where $\infty$ is a formal point not in $\C$. The homeomorphism $\varphi: \widehat{\C} \rightarrow \PP^1$ given by $\varphi(z) = [1:z]$ for $z \in \C$ and $\varphi(\infty) = [0:1]$ identifies the Riemann sphere $\widehat{\C}$ with the {\em complex projective line} $\PP^1$. Moreover since the {\em stereographic projection} naturally identifies the Riemann sphere  with the unit sphere ${\cal S}^2\subset\E^3_\R$, we obtain three homeomorpic copies of the sphere: the sphere ${\cal S}^2$ itself, the Riemann sphere $\widehat{\C}$ and the projective line $\PP^1_{\C}$. In this way the group $\aut(\p^1)$ can be identified with conformal homeomorphisms of the sphere, see \cite[pg. 26]{sh97} for detailed explanation and for the proof of the following proposition.



\begin{proposition}
Any finite automorphism group of the sphere is conjugate to the rotation group.
\end{proposition}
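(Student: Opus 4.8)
The plan is to work through the identification $\aut(\p^1)\cong\mathrm{PGL}_2(\C)=\mathrm{PSL}_2(\C)$ (the M\"obius transformations of $\widehat{\C}$), reading the statement as: every finite subgroup $G\subset\mathrm{PGL}_2(\C)$ can be conjugated, inside $\mathrm{PGL}_2(\C)$, into a subgroup of the rotation group. Here I would first recall that, under the stereographic identification of the sphere with $\p^1$, the rotation group $\SO$ corresponds exactly to the subgroup $\mathrm{PSU}(2)\subset\mathrm{PGL}_2(\C)$ arising from unitary matrices, since $\mathrm{SU}(2)$ is the double cover of $\SO$. The single idea driving the whole argument is that $G$ preserves some positive-definite Hermitian inner product on $\C^2$: once such a form is available, an orthonormal basis for it realizes every element of $G$ by a unitary matrix, and the change-of-basis map is the conjugation we want.

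To produce the invariant form I would average. Because elements of $\mathrm{PGL}_2(\C)$ are matrices only up to a scalar, I lift each $g\in G$ to a matrix $\tilde g\in\mathrm{SL}_2(\C)$, which is well defined up to sign. Starting from the standard Hermitian form $H_0=\f I$, set
\begin{equation}
  H=\sum_{g\in G}\tilde g^{*}H_0\,\tilde g .
\end{equation}
This sum is independent of the sign choices, since $(-\tilde g)^{*}H_0(-\tilde g)=\tilde g^{*}H_0\tilde g$, and it is manifestly positive-definite Hermitian. Invariance is the reindexing computation $\tilde h^{*}H\tilde h=\sum_{g}(\tilde g\tilde h)^{*}H_0(\tilde g\tilde h)=H$, using that $\tilde g\tilde h=\pm\,\widetilde{gh}$ and that $g\mapsto gh$ permutes $G$, so that the signs again drop out and the summand is unchanged.

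To finish, I would choose a basis of $\C^2$ orthonormal with respect to $H$ and let $\phi\in\mathrm{PGL}_2(\C)$ be the corresponding change of coordinates; in the new basis each $\tilde g$ preserves the standard Hermitian form, i.e.\ becomes unitary, so $\phi G\phi^{-1}\subset\mathrm{PU}(2)=\mathrm{PSU}(2)\cong\SO$, which is the claim. The only genuinely delicate point is the passage from $\mathrm{PGL}_2(\C)$ to honest linear representatives: one must check that the averaging does not depend on the arbitrary scalars, and the $\mathrm{SL}_2$-lift (canonical up to $\pm1$) is precisely what makes the construction well defined. As an alternative that sidesteps the lift, one could instead view the space of positive-definite Hermitian forms modulo scaling as the hyperbolic space $\HH^3$, on which $\mathrm{PGL}_2(\C)$ acts by isometries with $\p^1=\partial\HH^3$; a finite group of isometries of this complete, simply connected, nonpositively curved space fixes its circumcenter, and the stabilizer of a point is a conjugate of $\mathrm{PSU}(2)=\SO$, yielding the same conclusion.
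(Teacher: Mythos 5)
Your argument is correct: the unitarian trick (lifting to $\mathrm{SL}_2(\C)$ up to sign, averaging the standard Hermitian form over the finite group, and diagonalizing to conjugate the group into $\mathrm{PSU}(2)\cong SO_3(\R)$) is the standard proof, and the sign-independence and reindexing steps you flag are exactly the points that need checking. The paper itself gives no proof, deferring to the cited reference (Shurman, p.~26), where essentially this same averaging argument appears, so there is nothing to reconcile.
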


Moreover all the types of finite rotation groups are classified as follows

\begin{proposition}\label{prp:platonic}
Each finite rotation group of the sphere is isomorphic to one of the following groups:
\begin{enumerate}
\item cyclic groups $\mathcal{C}_n$,
\item dihedral groups $\mathcal{D}_n$,
\item the symmetry  groups of tetrahedron $\mathcal{T}$, octahedron $\mathcal{O}$ or icosahedron $\mathcal{I}$.
\end{enumerate}
\end{proposition}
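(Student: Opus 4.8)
The plan is to run the classical pole-counting argument going back to Klein. Let $G\le\SO$ be a finite rotation group with $n=|G|\ge 2$. Every non-identity element of $G$ is a rotation about a unique axis through the origin and therefore fixes exactly two antipodal points on the unit sphere $\mathcal{S}^2$; I call these the \emph{poles} of that element. Let $P\subset\mathcal{S}^2$ be the set of all poles arising from non-identity elements. Since $G$ permutes the rotation axes, it permutes $P$, so $G$ acts on the finite set $P$, and this action is the object I would analyse.

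First I would count the incident pairs $(g,p)$ with $g\in G\setminus\{e\}$ and $g(p)=p$ in two ways. Summing over $g$, each non-identity rotation contributes exactly its two poles, giving $2(n-1)$ pairs. Summing over $p$, the stabiliser $G_p$ is a finite group of rotations about a fixed axis, hence cyclic of some order $r_p\ge 2$, contributing $r_p-1$ non-identity elements. Grouping the poles into the $G$-orbits $O_1,\dots,O_k$ (poles in a common orbit have conjugate, hence equinumerous, stabilisers of a common order $r_i$), the orbit--stabiliser theorem gives $|O_i|=n/r_i$, and equating the two counts yields
\begin{equation}
 2(n-1)=\sum_{i=1}^k \frac{n}{r_i}\,(r_i-1),
 \qquad\text{equivalently}\qquad
 2-\frac{2}{n}=\sum_{i=1}^k\Bigl(1-\frac{1}{r_i}\Bigr).
\end{equation}

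Next I would extract the admissible orbit data from this identity. Its left-hand side lies in $[1,2)$ while each summand satisfies $\tfrac12\le 1-1/r_i<1$, which immediately forces $k\in\{2,3\}$. For $k=2$ the relation reduces to $2/n=1/r_1+1/r_2$ with $r_i\le n$, whence $r_1=r_2=n$ and $G$ is the cyclic group $\mathcal{C}_n$ generated by a single rotation. For $k=3$, ordering $r_1\le r_2\le r_3$, the condition $1/r_1+1/r_2+1/r_3=1+2/n>1$ admits exactly the solutions $(2,2,m)$ with $n=2m$, together with $(2,3,3)$, $(2,3,4)$, $(2,3,5)$ giving $n=12,24,60$; these correspond respectively to the dihedral group $\mathcal{D}_m$ and to the rotation groups $\mathcal{T},\mathcal{O},\mathcal{I}$ of the tetrahedron, octahedron and icosahedron.

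The main obstacle is that the counting only pins down $n$ and the orbit sizes; it does not by itself determine the abstract isomorphism type. To close this gap I would, for each numerical solution, reconstruct $G$ from its action on the three distinguished orbits, identifying the orbit of size $n/r_i$ with the vertices, edge-midpoints, or face-centres of the corresponding regular configuration; the resulting faithful permutation action fixes $G$ up to isomorphism (so that $\mathcal{T}\cong A_4$, $\mathcal{O}\cong S_4$, $\mathcal{I}\cong A_5$). The genuinely delicate point is the icosahedral case $(2,3,5)$: one must verify that such a pole configuration is actually \emph{realisable} by a rotation group and that any two realisations are conjugate, which amounts to constructing the icosahedron explicitly and checking that its rotational symmetries act on the $12$, $30$, $20$ poles as predicted. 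The cyclic and dihedral cases are immediate, and realisability of $\mathcal{T}$ and $\mathcal{O}$ follows likewise from the explicit Platonic solids.
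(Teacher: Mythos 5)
The paper does not actually prove this proposition: it is stated as a classical fact, with the reference given in the preceding paragraph covering the surrounding discussion, so there is no in-paper argument to compare against. Your pole-counting (Klein) argument is the standard proof and is essentially correct: the double count $2(n-1)=\sum_{i}\frac{n}{r_i}(r_i-1)$, the resulting bound $k\in\{2,3\}$, and the enumeration of the solutions $(n,n)$, $(2,2,m)$, $(2,3,3)$, $(2,3,4)$, $(2,3,5)$ are all right, and you correctly flag that the genuine remaining work is passing from the numerical orbit data to the isomorphism type. Two small remarks. First, since the proposition asserts only an isomorphism type (not a conjugacy classification, and not that every type occurs), you do not need realisability of the pole configurations nor conjugacy of any two realisations; it suffices to show, for instance, that a group with datum $(2,3,5)$ and order $60$ is isomorphic to $\mathcal{I}\cong A_5$, e.g.\ by showing that the orbit of size $12$ is the vertex set of a regular icosahedron whose rotation group (of order $60$) contains $G$ and hence equals it. Second, in the case $k=2$ the inequality $r_i\le n$ should be justified by noting that each stabiliser is a subgroup of $G$, so $r_i$ divides $n$; then $1/r_1+1/r_2=2/n$ forces $r_1=r_2=n$, both orbits are singletons, and $G$ is the cyclic group of rotations about the single common axis. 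With these points made explicit your argument is a complete and correct proof of the statement the paper leaves to the literature.
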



\subsection{Grassmannians}

The set of all projective subspaces of dimension $k$ in $\p^n_\K$ forms a projective variety; the so called \emph{Grassmannian} $\G(k,n)$. We are mainly interested in two cases. First, the Grassmannian $\G(n-1,n)$ of all subspaces of dimension $n-1$ in $\p^n$, which is again a projective space, called the \emph{dual space} and denoted $\left(\p^n_\K\right)^\vee$.  Second, the variety of lines in $\p^3_\K$, which is a quadratic hypersurface in $\p^5$ and denoted  $\G=\G(1,3)$. For an  introduction to the theory of Grassmannians see  e.g. \cite{PoWa01} or \cite{Ha92}.

Let us focus on the group of automorphisms of $\G$. Any projective transformation $\phi:\PP^3\rightarrow\PP^3$ maps lines to lines. It turns out that it induces a transformation of $\PP^5$ preserving the Grassmannian~$\G$. In fact it induces an injective group homomorphism $\aut(\p^3_\K)\rightarrow\aut(P^5_\K)_\G$, we will write $\widehat{\phi}$ for the transformation induced by $\phi$. Let $\aut(\p^5_\K)^+_\G$ denotes the image of $\aut(\p^3_\K)$ under this homomorphism. Then it is a subgroup of index 2 in $\aut(\p^5_\K)_\G$. Its complement $\aut(\p^5_\K)^-_\G$ is formed by transformations induced by regular projective mappings $\PP^3\rightarrow(\PP^3)^\vee$, cf. \cite[Theorem 10.19]{Ha92}.

\section{Projective equivalences of finite subsets of $\PP^1$}

The paper is devoted to studying selected situations when the computation of projective (or other) equivalences of certain algebraic curves and surfaces  can be simply solved with a unifying approach for determining projective equivalences of associated finite sets on the projective line. Hence, in this section we formulate two algorithms devoted to the detection of equivalences of finite point sets on the projective line given either directly or as the roots of a polynomial.

\subsection{Finite subsets as the collections of points}

Consider two finite subsets $A$ and $B$ of $\PP^1$, obviously they can be projectively equivalent  only if they have the same cardinality.  Since any transformation of the~projective line is determined by three points we see that $\aut(\p^1)_{A,B}$ is non-empty whenever $\#A=\#B=3$. In fact in this case it is isomorphic to the~permutation group on three elements.  We have already seen (Proposition~\ref{prp:cross}) that four point sets are not projectively equivalent in general -- surprisingly if there exists a projectivity mapping $A$ to $B$ then it is not unique.

\begin{lemma}\label{lem:num of aut}
  Let $A$ and $B$ be two subsets of $\PP^n$ such that $\aut(\p^n)_{A,B}$ is non-empty and finite. Then $\#\aut(\p^n)_{A,B}=\#\aut(\p^n)_{A}=\#\aut(\p^n)_B$
\end{lemma}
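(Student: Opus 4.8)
The plan is to exhibit explicit bijections among the three sets by using a single fixed equivalence as a translation device; once each comparison map is shown to be a bijection, equality of cardinalities follows immediately. Since $\aut(\p^n)_{A,B}$ is assumed non-empty, I would first fix one element $\psi\in\aut(\p^n)_{A,B}$, so that $\psi(A)=B$ and, taking inverses, $\psi^{-1}(B)=A$.

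The key step is to realize $\aut(\p^n)_{A,B}$ as a coset of the stabilizer group $\aut(\p^n)_A$. I claim that $\phi\mapsto\psi\circ\phi$ is a bijection from $\aut(\p^n)_A$ onto $\aut(\p^n)_{A,B}$. It is well defined, since $\phi(A)=A$ gives $(\psi\circ\phi)(A)=\psi(A)=B$; it is injective because left composition with the invertible map $\psi$ is injective; and it is surjective because for any $\chi\in\aut(\p^n)_{A,B}$ the map $\psi^{-1}\circ\chi$ satisfies $(\psi^{-1}\circ\chi)(A)=\psi^{-1}(B)=A$, hence lies in $\aut(\p^n)_A$ and is sent to $\chi$. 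This yields $\#\aut(\p^n)_{A,B}=\#\aut(\p^n)_A$.

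A symmetric argument, now composing on the right, handles $B$: the assignment $\phi\mapsto\phi\circ\psi$ maps $\aut(\p^n)_B$ bijectively onto $\aut(\p^n)_{A,B}$, where $\psi(A)=B$ guarantees well-definedness and $\chi\mapsto\chi\circ\psi^{-1}$ provides the inverse (one checks $(\chi\circ\psi^{-1})(B)=\chi(A)=B$). This gives $\#\aut(\p^n)_{A,B}=\#\aut(\p^n)_B$, and combining the two equalities completes the proof.

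I do not expect a genuine obstacle, as the statement is the standard fact that a non-empty set of equivalences between two objects is a coset, on either side, of the corresponding automorphism group. The only point demanding care is keeping the order of composition consistent, comparing with $\aut(\p^n)_A$ via left cosets and with $\aut(\p^n)_B$ via right cosets. The finiteness hypothesis serves merely to guarantee that the equal cardinalities are finite numbers; the bijections themselves are valid for sets of arbitrary cardinality.
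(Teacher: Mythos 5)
Your proof is correct and follows essentially the same route as the paper: fix one element of $\aut(\p^n)_{A,B}$ and compose with it (on the left for $\aut(\p^n)_A$, on the right for $\aut(\p^n)_B$) to get the required bijections. You simply spell out the injectivity and surjectivity checks that the paper leaves as ``easily seen.''
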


\begin{proof}
 To prove $\#\aut(\p^n)_{A,B}=\#\aut(\p^n)_{A}$ we construct the mapping $\aut(\p^n)_{A}\rightarrow\aut(\p^n)_{A,B}$ as follows. Fix $\phi\in\aut(\p^n)_{A,B}$ and define $\psi\mapsto\phi\circ\psi$. It is easily seen that it is bijective. The second part is analogous.
\end{proof}

\begin{lemma}
  Let $A=\{\f a_1,\dots,\f a_4\}\subset\PP^1$ be a set consisting of four distinct points. Then
  $\aut(\p^1)_A$ is a group of order at least four.  More precisely $\aut(\p^1)_A=\Z_2\times \Z_2$ unless $[\f a_1,\f a_2;\f a_3,\f a_4]\in\left\{-1,2,\frac{1}{2},\e^{\pm\I\pi}\right\}$
\end{lemma}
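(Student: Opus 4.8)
The plan is to realize $\aut(\p^1)_A$ as a subgroup of the symmetric group $S_4$ and then to identify this subgroup with the stabiliser of the cross-ratio under the natural $S_4$-action on orderings of the four points.

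First I would send each $\phi\in\aut(\p^1)_A$ to the permutation it induces on the four points of $A$; this gives a group homomorphism $\rho\colon\aut(\p^1)_A\to S_4$. Because a projectivity of $\p^1$ is uniquely determined by its values at three points, any $\phi$ fixing all four points of $A$ is the identity, so $\rho$ is injective and $\aut(\p^1)_A$ is (isomorphic to) a subgroup of $S_4$; in particular it is finite. It remains to pin down the image of $\rho$.

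A permutation $\sigma\in S_4$ lies in $\im\rho$ exactly when there is a projectivity carrying the ordered quadruple $(\f a_1,\f a_2,\f a_3,\f a_4)$ to $(\f a_{\sigma(1)},\f a_{\sigma(2)},\f a_{\sigma(3)},\f a_{\sigma(4)})$. By Proposition~\ref{prp:cross} this happens precisely when the two orderings share the same cross-ratio, i.e. when $[\f a_{\sigma(1)},\f a_{\sigma(2)};\f a_{\sigma(3)},\f a_{\sigma(4)}]=\lambda$, where $\lambda:=[\f a_1,\f a_2;\f a_3,\f a_4]$. Thus $\im\rho$ is exactly the stabiliser of $\lambda$ for the action of $S_4$ on the cross-ratio. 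A direct check of the symmetry identities of the cross-ratio shows that the Klein four-group $V=\{e,(12)(34),(13)(24),(14)(23)\}$ fixes $\lambda$ for every choice of four distinct points; since $V\cong\Z_2\times\Z_2$ this already yields $\#\aut(\p^1)_A\ge 4$ and establishes the first assertion.

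For the sharp statement I would use that, since $V$ acts trivially on $\lambda$, the $S_4$-action descends to the quotient $S_4/V\cong S_3$, which permutes the six classical values $\lambda,\ 1-\lambda,\ 1/\lambda,\ \lambda/(\lambda-1),\ (\lambda-1)/\lambda,\ 1/(1-\lambda)$. Hence $\aut(\p^1)_A$ strictly contains $V$ if and only if some nontrivial element of this $S_3$ fixes $\lambda$. The three involutions give the equations $\lambda=1/\lambda$, $\lambda=1-\lambda$ and $\lambda=\lambda/(\lambda-1)$, whose solutions (after discarding $\lambda\in\{0,1\}$, forbidden since the four points are distinct) are the harmonic values $\lambda\in\{-1,\tfrac{1}{2},2\}$; the two $3$-cycles give $\lambda=1/(1-\lambda)$ and $\lambda=(\lambda-1)/\lambda$, both equivalent to $\lambda^2-\lambda+1=0$, i.e. the equianharmonic values $\lambda=\e^{\pm\I\pi/3}$. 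Away from these exceptional cross-ratios the stabiliser is exactly $V$, so $\aut(\p^1)_A\cong\Z_2\times\Z_2$, as claimed. I expect the only genuine work to be the bookkeeping of this descent: matching each of the six fractional-linear substitutions with the coset of $S_3$ it represents and verifying the order of each, together with the harmless exclusion of the roots $\lambda\in\{0,1,\infty\}$. The two conceptual points — the injectivity of $\rho$ and the $V$-invariance of the cross-ratio — are immediate, so the argument reduces to this elementary root computation.
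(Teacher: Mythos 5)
Your proof is correct and follows the same route as the paper's --- the action of $S_4$ on the cross-ratio --- but you actually carry out the computation that the paper dismisses with ``it is known'', namely the injectivity of the map to $S_4$, the invariance under the Klein four-group, and the descent to $S_3$ acting on the six values with the explicit determination of the exceptional $\lambda$. Note that your calculation yields the equianharmonic values $\e^{\pm\I\pi/3}$ (the roots of $\lambda^2-\lambda+1=0$), which is the correct exceptional set; the paper's $\e^{\pm\I\pi}$ is evidently a typo, since $\e^{\pm\I\pi}=-1$ is already listed among the harmonic values.
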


\begin{proof}
Altogether there are 24 permutations on four elements, whereas there exist at most six different values of their cross-ratios. Thus by Proposition~\ref{prp:cross} the group $\aut(\p^1)_A$ has the order at least four. Moreover when the cross-ratio is different from $-1,2,\frac{1}{2},\e^{\pm\I\pi}$ then there are exactly six values and it is known that the subgroup of permutations on four points preserving their cross ratio is the Klein group $\Z_2\times\Z_2$.
\end{proof}

On contrary if two projectively equivalent sets have more than four points then the transformation is generically unique.

\begin{algorithm}[H]
\caption{Detection of equivalences of finite sets using cross-ratios.}\label{alg:finite cross} \algsetup{indent=2em}
\begin{algorithmic}[1]
 \REQUIRE  $A=\{\f a_1,\dots,\f a_k\}\subset\PP^1$, $B=\{\f b_1,\dots,\f b_k\}\subset\PP^1$

 \STATE
 Compute the cross-ratio $\lambda=[a_{i_1},a_{i_2};a_{i_3},a_{i_4}]$ of an arbitrary ordered quadruple from $A$.

 \STATE
 Denote $\Lambda=\left\{\lambda,\frac{1}{\lambda},1-\lambda,\frac{1}{1-\lambda},\frac{\lambda}{\lambda-1},\frac{\lambda-1}{\lambda}\right\}$.

 \STATE
 Compute $k\choose 4$ cross-ratios $\lambda_j=[b_{j_1},b_{j_2};b_{i_3},b_{j_4}]$ for all non-ordered quadruples from $B$.

 \STATE
 Denote $Q_B$ the set of all quadruples $\{b_{j_1},b_{j_2};b_{j_3},b_{j_4} \}$ for which $\lambda_j\in\Lambda$.

  \IF{$Q_B = \emptyset$}
   \STATE
     $\aut(\p^1)_{A,B}=\emptyset$
   \ELSE
   \STATE
    For each element $\{b_{i_1},b_{i_2},b_{i_3},b_{i_4}\}\in Q_B$ find the corresponding  automorphism $\phi_{\ell}$ mapping $\{a_{i_1},a_{i_2},a_{i_3},a_{i_4}\}$ to $\{b_{i_1},b_{i_2},b_{i_3},b_{i_4}\}$.
   \STATE
    When $\phi_{\ell}$ maps all points of $A$ onto all points of $B$ include such an automorphism  into $\aut(\p^1)_{A,B}$.

   \ENDIF

 \ENSURE $\aut(\p^1)_{A,B}$.
\end{algorithmic}
\end{algorithm}

\begin{example}\rm
Consider two sets of five points
\begin{equation}
 A = \left\{
[-1+2 \I : 2+\I ],
 [4 : -4-6 \I],
 [-5-4 \I : 7 \I],
 [-6-4 \I : 8 \I],
 [3+\I : -1-5 \I]
 \right\}
\end{equation}
and
\begin{equation}
 B = \left\{
 [5 \I : 1-2 \I ],
 [1-5 \I : 1],
 [-7-2 \I : 4 \I],
 [-9-\I : 1+4 \I],
 [3-3 \I : 0]
 \right\}
\end{equation}
Computing all the cross-ratios of the non-ordered quadruple composed of the first four elements of $A$ yields
\begin{equation}
\left\{\frac{1}{10}-\frac{\I}{5},2+4 \I,\frac{9}{10}+\frac{\I}{5},\frac{18}{17}-\frac{4 \I}{17},-\frac{1}{17}+\frac{4 \I}{17},-1-4 \I\right\}.
\end{equation}
Now, we compute the cross-ratios for all subsets of $B$ with four elements
\begin{equation}
\left\{\frac{1}{10}-\frac{\I}{5},-4 \I,-1-4 \I,1-\frac{\I}{4},\frac{19}{20}-\frac{\I}{40}\right\}.
\end{equation}
Since the cross-ratios $\frac{1}{10}-\frac{\I}{5}$ and $-1-4 \I$ are contained in both sets we have altogether $8$ (four for each cross-product) candidates determining the transformation. However only one couple of quadruples, in particular
\begin{equation}
\left\{
 [-1+2 \I : 2+\I ],
 [4 : -4-6 \I],
 [-5-4 \I : 7 \I],
 [-6-4 \I : 8 \I]
 \right\}
\end{equation}
and
\begin{equation}
\left\{
 [5 \I : 1-2 \I ],
 [1-5 \I : 1],
 [-7-2 \I : 4 \I],
 [-9-\I : 1+4 \I]
 \right\}
\end{equation}
determines a projective transformation represented by the matrix
\begin{equation}
\overline{\f A} =
\left(
\begin{array}{cc}
 1-5 \I & -2+3 \I \\
 -1-3 \I & 1+2 \I \\
\end{array}
\right).
\end{equation}
Since this transformation maps the remaining point from $A$ to the remaining point from $B$, it is the correct one.
\end{example}

\subsection{Finite sets as the~roots of a~polynomial}\label{sub:conjugate points}

A collection of $n$ points in $\PP^1$ can be given as a set of roots of homogeneous form of degree $n$. However, the forms carry more information because of possible higher multiplicities of its roots. Let $F(x_0,x_1)$ and $G(x_0,x_1)$ be the forms. Any $\phi\in\aut(\p^1)$ acts naturally on the set of forms of degree $n$ by $F\mapsto F\circ\phi$.  We define
\begin{equation}\label{eq:equiv_forms}
  \aut(\p^1)_{F,G}=\{\phi\in\aut(\p^1)\mid  \exists\lambda\in\C^*:\ G\circ\phi=\lambda F\}.
\end{equation}
If $A$ and $B$  are the sets of roots of $F(x_0,x_1)$ and $G(x_0,x_1)$ respectively, then obviously any $\phi\in\aut(\p^1)_{F,G}$ induces a transformation mapping $A$ to $B$. Thus $\aut(\p^1)_{F,G}\subset\aut(\p^1)_{A,B}$. If all the roots of $F$ and $G$ are simple then we have the equality. Nonetheless the inclusion may be proper,  in general .

\begin{algorithm}[H]
\caption{Detection of equivalences of finite sets given by polynomial relation.}\label{alg:finite poly} \algsetup{indent=2em}
\begin{algorithmic}[1]
 \REQUIRE $F(x_0,x_1)$ and $G(x_0,x_1)$

  \STATE
  Consider the ideal $I$ generated by the coefficients of the polynomial $G(a_{00} x_0 + a_{10} x_1, a_{01} x_0 + a_{11} x_1)- F(x_0,x_1)$ w.r.t. $x_0,x_1$.

  \STATE
  Compute the reduced Gr\"obner basis  $GB=\{g_1,\ldots,g_{\ell}\}$ of $I$ w.r.t. a suitable ordering of the variables $a_{00}, a_{10}, a_{01}, a_{11}$.

  \IF{$GB=\{ 1 \}$}
   \STATE
     $\aut(\p^1)_{F,G}=\emptyset$
   \ELSE
   \STATE
    Find a solution of the system of equations $g_1=0,\ldots,g_{\ell}=0$.
   \STATE
    $\aut(\p^1)_{F,G}$ consists of automorphisms given by all the solutions $a_{00}, a_{10}, a_{01}, a_{11}$
   \ENDIF

\ENSURE $\aut(\p^1)_{F,G}$.

\end{algorithmic}
\end{algorithm}

\begin{remark}\rm
In Algorithm \ref{alg:finite poly} we compute the Gr\"obner basis of the ideal generated by the coefficients of $G(a_{00} x_0 + a_{10} x_1, a_{01} x_0 + a_{11} x_1)- F(x_0,x_1)$, whereas two forms are projectively equivalent if one can be mapped to the other up to a complex multiple $\lambda$, cf. \eqref{eq:equiv_forms}. However we do not need to consider the additional parameter $\lambda$ (which would cost some computational time) since the matrix $\f A$ of the transformation is also determined uniquely up to a complex multiplication and hence it can ensure $\lambda = 1$.
\end{remark}

Although the Algorithm~\ref{alg:finite poly} requires to solve a~large system of non-linear equations, recall that two general forms of degree at least four are not projectively equivalent and thus we have the ideal $I=\langle 1\rangle$. For example, in this case CAS Mathematica or CAS Maple give a decision even for polynomials of degree 20 within a few seconds. Next,  the general polynomial of degree at least five possesses no projective automorphism and thus by Lemma~\ref{lem:num of aut} the transformation is unique for two equivalent generic forms of high degree. Again in this case the answer is obtained within a few seconds as the Gr\"obner basis possesses a special structure containing linear forms in $a_{00}, a_{10}, a_{01}, a_{11}$ which uniquely determine the automorphism. In addition, the basis also contains one nonlinear term responsible for a particular choice of $a_{00}, a_{10}, a_{01}, a_{11}$ (of course, describing the same automorphism for all choices), cf. Example~\ref{ex:2forms} and \eqref{eq:GB}.

\begin{example}\rm\label{ex:2forms}
Consider two forms of degree six
\begin{equation}
F=571 x_0^6-426 x_1 x_0^5-1827 x_1^2 x_0^4+8532 x_1^3 x_0^3-11259 x_1^4 x_0^2+12150 x_1^5 x_0-3645 x_1^6
\end{equation}
and
\begin{equation}
G=-569 x_0^6+430 x_1 x_0^5+1758 x_1^2 x_0^4+3891 x_1^3 x_0^3+6054 x_1^4 x_0^2+2105 x_1^5 x_0+2055 x_1^6
\end{equation}
The Gr\"obner basis of the ideal generated by the coefficients of
\begin{equation}
G(a_{00} x_0 + a_{01} x_1, a_{10} x_0 + a_{11} x_1)- F(x_0,x_1)
\end{equation}
has the form
\begin{equation}\label{eq:GB}
\left\{1771561 a_{11}^6-46656,a_{10}\gamma -a_{11} ,2 a_{01} +5 a_{11} ,6 a_{00} -7 a_{11} \right\}.
\end{equation}
Since the transformation is unique up to a scalar multiplication, we can omit the first polynomial and solve the linear system only, i.e., we obtain
\begin{equation}
a_{01} \to -\frac{15 a_{00} }{7}, \quad a_{10} \to \frac{6 a_{00}}{7}, \quad a_{11} \to \frac{6 a_{00}}{7}
\end{equation}
yielding the transformation of $\p^1$ represented by the matrix
\begin{equation}
\overline{\f A} = \left(
\begin{array}{cc}
 7 & -15 \\
 6 & 6 \\
\end{array}
\right)
\end{equation}
mapping the roots of $F$ to the roots of $G$.
\end{example}

\section{Projective and other equivalences of selected algebraic varieties}

In this section we will discuss several problems which can be reduced to the computation of equivalences of finite sets in $\p^1$. The general setting of our problem is following. Let be given ${\cal G}$ a subgroup of $\aut(\p^n)$ and two algebraic varieties $X,Y\subset\p^n$. Our goal is to compute ${\cal G}_{X,Y}$ using the approach introduced in the previous section. Hence we find suitable forms $F(x_0,x_1)$ and $G(x_0,x_1)$ associated to the varieties $X$ and $Y$ together with the inclusion
\begin{equation}
   \iota:{\cal G}_{X,Y}\hookrightarrow \aut(\p^1)_{F,G}.
\end{equation}
The idea is that for $\phi\in \aut(\p^1)_{F,G}$ it is simple to decide whether there exists $\psi\in{\cal G}_{X,Y}$ such that $\phi=\iota(\psi)$.  In this sense $\aut(\p^1)_{F,G}$ consists of candidates for possible mappings $X\rightarrow Y$. If $\aut(\p^1)_{F,G}$ is not too large compared to ${\cal G}_{X,Y}$ -- in particular if they have the same dimension, then we will call it a~\emph{good candidate set} of ${\cal G}_{X,Y}$.

\subsection{Projective equivalences of rational curves}\label{sec:curves}

When studying projective equivalences of  algebraic varieties then it is natural to start with the further simplest case after the collections of points, i.e., with rational curves. Recently, \cite{HaJu18} published a paper devoted to the detection of equivalences and symmetries of rational curves with respect to the group of projective transformations including the subgroup of affine transformations. We continue in this investigation and present an algorithm based on computing projective equivalences of finite point sets.

By a rational curve of degree $d$ in $\p^n$ we mean the image of a~morphism $\PP^1\rightarrow\PP^n$ given by
\begin{equation}\label{eq:param}
  \f p(s,t) = \left[p_0(s,t):p_1(s,t):\cdots:p_n(s,t)\right],
\end{equation}
where $p_i(s,t)$ are homogeneous polynomials of degree $d$ without a common factor.  Moreover the mapping is assumed to be a~birational morphism, i.e,  it is almost everywhere injective. In what follows, we assume the~curve to be \emph{non-degenerate}, i.e., it is not contained in any hyperplane, or equivalently  all the polynomials $p_i$ are linearly independent over $\C$. Obviously a curve can be non-degenerate only if $d\geq n$.

Since the degree of a~curve is a projective invariant, the equivalent curves must have the same degree. Recall that any parameterization $\f p(s,t)$ of a rational curve of degree $d$ in $\PP^n$ is an~image of the~rational normal curve
\begin{equation}
  C_d:\quad \f c_d(s,t)=\left[s^d:s^{d-1} t:\cdots t^d\right]
\end{equation}
under some projection $\PP^d\backslash M\rightarrow\PP^n$, where $M$ is a linear subspace of dimension $d-n-1$. In particular, if $p_i(s,t)= \sum_{j=0}^d p_{ij} s^{d-j}t^j$, then the projection is given by the matrix $(p_{ij})_{i,j=0}^{n,d}$ and the subspace $M$ is generated by the kernel of the matrix. Clearly there exists a projective transformation taking one parameterization to the other one if and only if the projection matrices have the same kernels.

\begin{proposition}
Two parameterizations of non-degenerate rational curves of degree $n$ in $\PP^n$ are always projectively equivalent.
\end{proposition}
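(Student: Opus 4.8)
The plan is to read the result off directly from the description of degree-$n$ parameterizations recalled immediately above the statement. First I would note that for a curve of degree $d=n$ in $\PP^n$ the exceptional subspace $M$ has projective dimension $d-n-1=-1$, that is, $M=\emptyset$. Consequently the associated projection $\PP^n\setminus M\rightarrow\PP^n$ is in fact defined on the whole of $\PP^n$, and it is represented by the \emph{square} matrix $P=(p_{ij})_{i,j=0}^{n,n}$, so that the parameterization factors as $\f p=P\cdot\f c_n$ through the rational normal curve $C_n$.

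Next I would translate the non-degeneracy hypothesis into a property of $P$. Since $p_i(s,t)=\sum_{j=0}^n p_{ij}s^{n-j}t^j$, the linear independence of the forms $p_0,\dots,p_n$ over $\C$ is precisely the linear independence of the rows of $P$, i.e. the invertibility of $P$. Hence the projection attached to a non-degenerate degree-$n$ parameterization is genuinely an element of $\aut(\PP^n)\cong\PP\GL_{n+1}(\C)$, and its kernel is the zero subspace.

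Finally, given two non-degenerate degree-$n$ parameterizations $\f p=P\cdot\f c_n$ and $\f q=Q\cdot\f c_n$ of curves $X$ and $Y$, both $P$ and $Q$ are invertible, so their kernels coincide (both being trivial). By the criterion recalled just before the statement --- a projective transformation carrying one parameterization to the other exists if and only if the two projection matrices have the same kernel --- the desired equivalence exists. Concretely, one may take $\phi=[QP^{-1}]\in\aut(\PP^n)$; then $\phi\circ\f p=QP^{-1}P\cdot\f c_n=Q\cdot\f c_n=\f q$, so $\phi(X)=Y$.

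I do not expect a genuine obstacle here: the argument is essentially formal. The only point that requires a line of care is the equivalence between non-degeneracy and invertibility of the square matrix $P$, which is immediate linear algebra. The entire conceptual content lies in observing that the hypothesis $d=n$ forces $M$ to be empty, so that the ``projection'' collapses into an honest linear automorphism of $\PP^n$.
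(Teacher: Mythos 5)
Your argument is correct and is exactly the reasoning the paper intends: the proposition is stated without an explicit proof precisely because it follows from the preceding paragraph, namely that for $d=n$ the projection matrix is square, non-degeneracy makes it invertible, and hence all such parameterizations have the same (trivial) kernel. Your explicit construction $\phi=[QP^{-1}]$ just makes the implicit equivalence concrete.
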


%
%
%
%
%

Hence in what follows we focus on transformations between curves of degree $d>n$. Let $C\subset\PP^n$ be parameterized by $\f p(s,t)$ then the \emph{osculating $k$-planes}, $k=1,\ldots,n-1$, having the contact of order at least $k+1$ are spanned by $\frac{\partial^k \f p(s,t)}{\partial s^k},\frac{\partial^k \f p(s,t)}{\partial s^{k-1}\partial t},\dots,\frac{\partial^k \f p(s,t)}{\partial t^k}$. For $k=n-1$ we obtain \emph{osculating hyperplanes}. \emph{Stall points} are the points where the osculating hyperplane has the~contact higher than expected. They are given by the condition
\begin{equation}\label{oscul}
\Delta_\f p(s,t) =  \det\left[\frac{\partial^n \f p(s,t)}{\partial s^n},\frac{\partial^n \f p(s,t)}{\partial s^{n-1}\partial t},\cdots,\frac{\partial^n \f p(s,t)}{\partial t^n}\right]=0.
\end{equation}
In particular, the homogeneous form $\Delta_\f p(s,t)$ has degree $(d-n)(n+1)$ and thus on any non-degenerate curve with degree $d>n$, there exist only finitely many stalls. A projective transformation takes osculating $k$-planes of the curve to osculating $k$-planes of its image curve, in particular stalls are mapped to stalls.

\begin{theorem}\label{thm:cand curves}
  Let $C,D\subset\p^n$ be non-degenerate rational curves of degrees $d>n$ and let $\f p:\p^1\rightarrow C$ and $\f q:\p^1\rightarrow D$ be the birational morphisms parameterizing them. Then $\aut(\p^1)_{\Delta_\f p,\Delta_\f q}$ is a candidate group for $\aut(\p^n)_{C,D}$ and the inclusion $\iota:\aut(\p^n)_{C,D}\hookrightarrow\aut(\p^1)_{\Delta_\f p,\Delta_\f q}$ is given by $\iota:\phi\mapsto \f q^{-1}\circ\phi\circ\f p$.
\end{theorem}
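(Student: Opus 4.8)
The plan is to verify directly that the assignment $\iota:\phi\mapsto\f q^{-1}\circ\phi\circ\f p$ is a well-defined injection of $\aut(\p^n)_{C,D}$ into $\aut(\p^1)_{\Delta_{\f p},\Delta_{\f q}}$. Recall that a candidate group for $\aut(\p^n)_{C,D}$ was defined to be exactly a set $\aut(\p^1)_{F,G}$ equipped with such an inclusion, so establishing these two properties of $\iota$ is all the statement requires. First I would check that $\iota$ lands in $\aut(\p^1)$ at all: for $\phi\in\aut(\p^n)_{C,D}$ the equality $\phi(C)=D$ means that $\phi\circ\f p$ is again a birational parameterization of $D$, so composing with the birational inverse $\f q^{-1}$ produces a birational self-map $\psi:=\f q^{-1}\circ\phi\circ\f p$ of $\p^1$. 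Since $\p^1$ is a smooth projective curve, every birational self-map extends to a morphism and is therefore an automorphism, whence $\psi\in\aut(\p^1)$.

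The core of the argument is to promote this to $\psi\in\aut(\p^1)_{\Delta_{\f p},\Delta_{\f q}}$, i.e. to produce $\lambda\in\C^*$ with $\Delta_{\f q}\circ\psi=\lambda\Delta_{\f p}$. By construction $\f q\circ\psi=\phi\circ\f p$ as morphisms $\p^1\to\p^n$. Writing $A$ for a matrix of $\phi$ and $B$ for a matrix of $\psi$, both $A\f p$ and $\f q\circ\psi$ are tuples of degree-$d$ forms without common factor that induce the same morphism, hence are proportional:
\[
  A\,\f p=\mu\,(\f q\circ\psi),\qquad \mu\in\C^*.
\]
I would then apply the Wronskian-type determinant $\Delta$ to both sides and invoke its three transformation laws. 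Linearity of the partial-derivative operators together with multilinearity of the determinant gives $\Delta_{A\f p}=\det(A)\,\Delta_{\f p}$; a constant rescaling of the columns gives $\Delta_{\mu\f r}=\mu^{n+1}\Delta_{\f r}$; and the reparameterization law reads
\[
  \Delta_{\f q\circ\psi}=(\det B)^{\binom{n+1}{2}}\,(\Delta_{\f q}\circ\psi).
\]
Combining the three produces $\Delta_{\f q}\circ\psi=\lambda\Delta_{\f p}$ with $\lambda=\det(A)\,\mu^{-(n+1)}(\det B)^{-\binom{n+1}{2}}\in\C^*$, as needed. I expect the reparameterization law to be the main obstacle; it has to be proved by the chain rule, noting that a linear substitution of the parameters sends each operator $\partial_s^{n-k}\partial_t^k$ to a fixed linear combination of the $\partial_{\tilde s}^{n-j}\partial_{\tilde t}^j$ governed by the matrix $\sym^n B$, so that the induced column operations multiply the determinant by $\det(\sym^n B)=(\det B)^{\binom{n+1}{2}}$. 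Throughout one uses tacitly that $\Delta_{\f p},\Delta_{\f q}\not\equiv 0$, which holds because non-degeneracy of $C$ and $D$ is equivalent to linear independence of the $p_i$, respectively the $q_i$, and hence to non-vanishing of this generalized Wronskian.

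It remains to prove injectivity. If $\iota(\phi_1)=\iota(\phi_2)$, then $\phi_1\circ\f p=\phi_2\circ\f p$, so $\phi_2^{-1}\phi_1$ fixes every point of $C$. The fixed-point locus of a projective transformation is a finite union of linear subspaces, namely the projectivizations of the eigenspaces of its matrix; as $C$ is irreducible it lies in a single one of them, and as $C$ is non-degenerate it spans $\p^n$, forcing that eigenspace to be all of $\C^{n+1}$ and $\phi_2^{-1}\phi_1=\mathrm{id}$. Hence $\phi_1=\phi_2$ and $\iota$ is injective, exhibiting $\aut(\p^1)_{\Delta_{\f p},\Delta_{\f q}}$ as a candidate group for $\aut(\p^n)_{C,D}$. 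To argue that it is moreover a \emph{good} candidate set one observes that $\Delta_{\f p}$ is a non-constant form, so that, except in degenerate situations where its roots are supported on fewer than three points, $\aut(\p^1)_{\Delta_{\f p},\Delta_{\f q}}$ is finite and thus of the same (zero) dimension as $\aut(\p^n)_{C,D}$.
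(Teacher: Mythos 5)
Your proof is correct, and it is considerably more complete than the one printed in the paper. The paper's proof consists solely of the commutative reparameterization diagram, leaning on the earlier remark that projective transformations send stalls to stalls; that remark is only a set-theoretic statement about the zero loci, whereas membership in $\aut(\p^1)_{\Delta_{\f p},\Delta_{\f q}}$ as defined in \eqref{eq:equiv_forms} requires the stronger form-level identity $\Delta_{\f q}\circ\psi=\lambda\Delta_{\f p}$, which also tracks multiplicities. You supply exactly this missing step via the three transformation laws of the generalized Wronskian ($\Delta_{A\f p}=\det(A)\,\Delta_{\f p}$, the scaling law, and the reparameterization law with factor $(\det B)^{\binom{n+1}{2}}=\det(\sym^n B)$), all of which are stated and combined correctly; the proportionality $A\f p=\mu(\f q\circ\psi)$ is justified properly by coprimality of the coordinate forms, and the tacit non-vanishing of $\Delta_{\f p}$ is rightly tied to non-degeneracy. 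You also prove injectivity of $\iota$ (via the eigenspace structure of the fixed locus and non-degeneracy of $C$), which the paper's proof omits entirely even though injectivity is part of what ``candidate group'' means. In short: same underlying strategy as the paper, but your version actually closes the argument; the only cosmetic caveat is that your closing sentence about finiteness addresses ``good candidate set,'' which the theorem as stated does not require.
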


\begin{proof}
Assume that there exists a projective transformation $\phi$ taking the curve $C: \f p(s,t)$ to the curve $D: \f q(u,v)$ and thus there exists the reparameterization $\psi$ making the following diagram commutative
\begin{equation}\label{eq:diag reparam}
\begin{array}{c}
\xymatrix{
 C  \ar[r]^{\phi} & D\\
 \PP^1 \ar[r]^{\psi}\ar[u]^{\f p} & \PP^1\ar[u]_{\f q}
}
\end{array}.
\end{equation}
\end{proof}

Now, given $\psi\in\aut(\p^1)_{\Delta_\f p,\Delta_\f q}$ we would like to decide whether it is the image of some $\phi\in\aut(\p^n)_{C,D}$.  By \eqref{eq:diag reparam} this happens if and only if the parameterizations $\f p$ and $\f q\circ\psi$ are projectively equivalent.  Nonetheless this is equivalent to the condition that the matrices of the coefficients of these parameterizations have the same kernels. The method is summarized in Algorithm~\ref{alg:rat curves}.

\begin{algorithm}[H]
\caption{Projective equivalences of rational curves.}\label{alg:rat curves} \algsetup{indent=2em}
\begin{algorithmic}[1]
\REQUIRE Curves $C: \f p(s,t)$ and $D: \f q(s,t)$

\STATE Compute the forms $\Delta_\f p(s,t)$ and $\Delta_\f q(s,t)$, cf.~\eqref{oscul}.
\STATE Find the candidate group $\aut(\p^1)_{\Delta_\f p,\Delta_\f q}$ composed of automorphisms $\psi$ described by matrices $\overline{\f B}_{\psi}$, see Section~\ref{sub:conjugate points}.
\STATE For all $\psi \in \aut(\p^1)_{\Delta_\f p,\Delta_\f q}$, compute the kernels $K_{\f p}$ and $K_{\f q\circ\psi}$ of the matrices of the coefficients of $\f p(s,t)$ and $\f q(\overline{\f B}_{\psi} (s,t)^{\top})$.
\IF{$K_{\f p} = K_{\f q\circ\psi}$}
   \STATE
    By solving linear equations corresponding to $\overline{\f A} \f p(s,t) - \f q(\overline{\f B}_{\psi} (s,t)^{\top})$  compute the projective transformation $\phi$ given by $\overline{\f A}$ and include it into $\aut(\p^n)_{C,D}$.
   \ENDIF

 \ENSURE $\aut(\p^n)_{C,D}$.
\end{algorithmic}
\end{algorithm}

The case of projective equivalences between rational quartics in $\p^4$  was already studied in \cite[pg. 42]{Te36}, with the result that two quartics with same osculating polynomial (up to reparameterization) are projectively equivalent.  Let us briefly recall the arguments. For a quartic parameterization $\f p(s)=[p_0(s):\cdots :p_3(s)]$ of $C$ the condition that $\f p(s_i)$, $i=1,\dots,4$ are coplanar is symmetric algebraic relation in $s_i$ which is moreover linear in each parameter -- because when given three points on $C$ then the fourth point is determined uniquely. Hence this relation has the form
\begin{equation}\label{eq:magic}
  \Phi(s_1,s_2,s_3,s_4)=\varphi_4s_1s_2s_3s_4+\varphi_3\sum_{i<j<k} s_is_js_k+\varphi_2 \sum_{i<j}s_is_j+\varphi_1\sum_i s_i+\varphi_0.
\end{equation}
This polynomial is unique up to a scalar and it is a polarized form of a quartic polynomial
\begin{equation}
  \Phi(s,s,s,s)=\varphi_4s^4+4\varphi_3s^3+6\varphi_2s^2+4\varphi_1s+\varphi_0.
\end{equation}
The roots of this polynomial correspond to points on the curve $C$ with their osculating plane of contact order four. Thus this is precisely (up to a scalar multiple) our osculating polynomial $\Delta_\f p(s)$.

From the construction of \eqref{eq:magic} it is clear that it is invariant under projective transformations. Conversely let be given two curves with the same osculating polynomial and thus with the same \eqref{eq:magic} as well. Then there exists a correspondence between quadruples of coplanar points and thus between planes.  This provides a projective transformation taking one curve to the other. This proof, in fact, does not work only for quartics in $\p^3$ but for any non-degenerate rational curve of degree $n+1$ in $\p^n$, So we arrive at the following proposition:

\begin{proposition}\label{Prop:Telling}[Telling]
  Let $C,D\subset\p^n$ be rational curves of degree $n+1$ and let $\aut(\p^1)_{\Delta_\f p,\Delta_\f q}$   be a candidate group. Then the mapping $\iota:\aut(\p^n)_{C,D}\hookrightarrow\aut(\p^1)_{\Delta_\f p,\Delta_\f q}$ is a~bijection.
\end{proposition}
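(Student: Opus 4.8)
The plan is to upgrade the injection $\iota$ furnished by Theorem~\ref{thm:cand curves} to a bijection; since $\iota$ is already injective, the only thing to prove is surjectivity, namely that every candidate $\psi\in\aut(\p^1)_{\Delta_\f p,\Delta_\f q}$ is realized by a genuine projective equivalence $\phi\in\aut(\p^n)_{C,D}$ with $\iota(\phi)=\psi$. First I would make precise the multilinear \emph{coplanarity form} hinted at in \eqref{eq:magic}. Working in an affine chart, the $(n+1)\times(n+1)$ determinant $\det[\f p(s_1),\dots,\f p(s_{n+1})]$ is alternating in the $s_i$ and hence divisible by the Vandermonde $\prod_{i<j}(s_i-s_j)$; since each column has degree $n+1$ while the Vandermonde contributes degree $n$ in each variable, the quotient
\begin{equation}
 \Phi_\f p(s_1,\dots,s_{n+1})=\frac{\det[\f p(s_1),\dots,\f p(s_{n+1})]}{\prod_{i<j}(s_i-s_j)}
\end{equation}
is symmetric and \emph{linear in each} $s_i$, and its vanishing expresses exactly that the points $\f p(s_i)$ are cohyperplanar. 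The preliminary fact I would record is that the diagonal restrictions $\Phi_\f p(s,\dots,s)$ and $\Phi_\f q(s,\dots,s)$ equal $\Delta_\f p$ and $\Delta_\f q$ up to nonzero scalars: this is the confluent Vandermonde identity, which turns the coincidence limit of the determinant into the Wronskian $\det[\f p,\f p',\dots,\f p^{(n)}]$ underlying \eqref{oscul}.

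Next I would transport the hypothesis along $\psi$. By definition $\Delta_\f q\circ\psi=\mu\,\Delta_\f p$ for some $\mu\in\C^*$. Precomposing $\Phi_\f q$ with $\psi$ in each of its $n+1$ slots yields another symmetric multilinear form whose diagonal equals $c\,\Delta_\f q(\psi(s))=c\mu\,\Delta_\f p(s)$, i.e. a scalar multiple of the diagonal of $\Phi_\f p$. Because a symmetric multilinear form over $\C$ is recovered from its diagonal by polarization, proportional diagonals force the forms themselves to be proportional, giving
\begin{equation}
 \Phi_\f q(\psi(s_1),\dots,\psi(s_{n+1}))=\mu'\,\Phi_\f p(s_1,\dots,s_{n+1})
\end{equation}
for some $\mu'\in\C^*$. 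In words, $\psi$ carries cohyperplanar $(n+1)$-tuples on $C$ to cohyperplanar $(n+1)$-tuples on $D$, and conversely.

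Finally I would convert this correspondence into the map $\phi$. The hyperplane sections of $C$ are the divisors cut on $\p^1$ by the linear system $V=\langle p_0,\dots,p_n\rangle$, a hyperplane inside the $(n+2)$-dimensional space $H^0(\p^1,\mathcal{O}(n+1))$, and likewise for $W=\langle q_0,\dots,q_n\rangle$ and $D$; the identity above says precisely that the linear reparametrization $\psi^\ast$ on $H^0(\p^1,\mathcal{O}(n+1))$ sends $V$ onto $W$. Projectivizing and dualizing, $\psi^\ast$ induces a projectivity $(\p^n)^\vee\to(\p^n)^\vee$, hence a $\phi\in\aut(\p^n)$, and tracing the identifications shows $\phi(C)=D$ with $\f q^{-1}\circ\phi\circ\f p=\psi$, i.e. $\iota(\phi)=\psi$. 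The hard part will be exactly this last passage: promoting the purely incidence-theoretic statement ``$\psi$ preserves cohyperplanarity'' to an honest \emph{linear} isomorphism $\psi^\ast V=W$. The form $\Phi_\f p$ is what makes this possible, since under apolarity it is a linear functional cutting out the hyperplane $\p(V)$; I would therefore spend most of the effort identifying $\Phi_\f p$ with a generator of the annihilator of $V$ and checking that the scalars and Jacobian factors introduced by the projective $\psi$ do not disturb this identification.
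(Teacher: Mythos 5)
Your proposal follows essentially the same route as the paper: both identify the symmetric multi-affine coplanarity relation $\Phi$ as the polarization of the osculating form $\Delta_{\f p}$, deduce from $\Delta_{\f q}\circ\psi=\mu\,\Delta_{\f p}$ that $\psi$ preserves cohyperplanar $(n+1)$-tuples, and convert that incidence correspondence between hyperplane sections into the projective transformation realizing $\psi$. You merely make explicit (via the determinant-over-Vandermonde construction of $\Phi$, the confluent Vandermonde identity, and the apolarity/linear-system argument for $\psi^\ast V=W$) the steps the paper's sketch leaves implicit, which is a sound sharpening rather than a different proof.
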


\begin{example}\rm
Consider two rational quartics in $\p^3$
\begin{multline}
C: \f p(s,t) = \left[75 s^4-296 s^3 t+424 s^2 t^2-272 s t^3+64 t^4:9 s^4-16 s^3 t-8 s^2 t^2+32 s t^3-16 t^4:
\right. \\ \left.
13 s^4-20 s^3 t-8 s^2 t^2+32 s t^3-16 t^4 : -53
   s^4+104 s^3 t-40 s^2 t^2-48 s t^3+32 t^4\right]
\end{multline}
and
\begin{multline}
D: \f q(s,t) = \left[32 s^4+96 s^3 t+64 s^2 t^2+36 s t^3+9 t^4:-80 s^4-128 s^3 t-48 s^2 t^2-4 s t^3+7 t^4:
\right. \\ \left.
-32 s^4-32 s^3 t+16 s^2 t^2+16 s t^3+6 t^4:64
   s^4+160 s^3 t+144 s^2 t^2+64 s t^3+10 t^4\right].
\end{multline}
First, we compute the osculating polynomials
\begin{equation}
\Delta_\f p = 3 s^4+20 s^3 t-72 s^2 t^2+64 s t^3-16 t^4
\end{equation}
and
\begin{equation}
\Delta_\f q = 8 s^4+24 s^3 t+12 s^2 t^2-2 s t^3-t^4.
\end{equation}
Now, employing Algorithm~\ref{alg:finite poly} we obtain four different reparameterizations of  $\Delta_\f q$ yielding $\Delta_\f p$. Using Proposition~\ref{Prop:Telling}, we know that to each reparameterization there will exist a corresponding projective transformation mapping $C$ to $D$. For the sake of brevity we present only one case, e.g., the reparameterization
\begin{equation}
s\mapsto s-2 t, \quad t\mapsto 4 t-4 s
\end{equation}
leads to the following projective transformation given by the matrix
\begin{equation}
\overline{\f A} =\left(
\begin{array}{cccc}
 1 & 13 & 16 & 2 \\
 9 & -16 & -24 & -10 \\
 4 & -18 & -6 & -6 \\
 -2 & -12 & 10 & -4 \\
\end{array}
\right).
\end{equation}
\end{example}

\subsection{Projective equivalences of rational ruled surfaces}
Let $S$ be a rational ruled surface in $\p^3$, i.e., a surface generated by a rational one-dimensional family of lines. Such a family is parameterized by a rational curve on the Grassmannian $\G\subset\p^5$.
Hence it is tempting to use the methods from the previous section to study the ruled surfaces as well. Formally, a \emph{rational ruled surface} $\av{S}$ is a projection of the~\emph{rational normal scroll} $\Sigma_{d_1,d_2}$ to $\PP^3$ for some $0\leq d_1\leq d_2$, where the rational normal scroll is the surface in $\PP^{d_1+d_2+1}$ parametrized as
\begin{equation}
  \left[ 1:s:\cdots s^{d_1}:t:t s:\cdots:ts^{d_2} \right].
\end{equation}
The degree of $\av{S}$ equals $d=d_1+d_2$.  Since the projective automorphisms of the plane and quadrics are well known, we will focus on the case $d>2$ only. Then the surface contains exactly one  one-dimensional family of lines, the so called \emph{rulings}. The image  of this family on the Grassmannian $\G$ is then a rational curve of degree $d$, and we will denote it by $\widehat{\av{S}}$.

Recall that each projective transformation $\phi:\p^3\rightarrow\p^3$ induces a transformation $\widehat{\phi}:\p^5\rightarrow\p^5$ preserving the Grassmannian $\G$ and that the map $\pi:\aut(\p^3)\rightarrow\aut^+(\G)$ is an isomorphism.

\begin{lemma}
  For two rational ruled surfaces $R$ and $S$, the  restriction map $\pi|_{\aut(\p^3)_{S,R}}: \aut(\p^3)_{S,R}\rightarrow\aut^+(\G)_{\widehat{S},\widehat{R}}$ is  bijective.
\end{lemma}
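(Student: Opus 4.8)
The plan is to lean on the fact, recalled just above the statement, that $\pi:\aut(\p^3)\rightarrow\aut^+(\G)$ is already a group isomorphism. Since a restriction of an injective map is injective, injectivity of $\pi|_{\aut(\p^3)_{S,R}}$ is automatic and carries no content. What remains is to identify the image, i.e.\ to show that $\pi$ carries $\aut(\p^3)_{S,R}$ onto $\aut^+(\G)_{\widehat S,\widehat R}$. Because $\pi$ is a bijection, this reduces to the single set-theoretic equivalence $\phi(S)=R$ if and only if $\widehat\phi(\widehat S)=\widehat R$, for $\phi\in\aut(\p^3)$ with $\widehat\phi=\pi(\phi)$. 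Once this equivalence is established, the forward implication shows that the restriction lands inside $\aut^+(\G)_{\widehat S,\widehat R}$, while the backward implication, applied to the unique $\pi$-preimage of any prescribed element of the target, yields surjectivity; together these give the claimed bijection.

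The bridge between the two sides is the observation that a rational ruled surface is the union of its rulings: with $\widehat S\subset\G$ denoting the ruling curve, one has $S=\bigcup_{\ell\in\widehat S}\ell$, and likewise for $R$. Since $\phi$ maps lines to lines and $\widehat\phi$ is by construction the transformation recording this action on $\G$, the line $\phi(\ell)$ is exactly the point $\widehat\phi(\ell)\in\G$, so $\phi(S)=\bigcup_{\ell\in\widehat S}\phi(\ell)$ is precisely the union of the lines parameterized by the curve $\widehat\phi(\widehat S)$. Thus ``$\phi(S)=R$'' and ``$\widehat\phi(\widehat S)=\widehat R$'' are two readings of one and the same family of lines, one as a surface in $\p^3$ and the other as a curve in $\G$. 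For the backward implication this already suffices: if $\widehat\phi(\widehat S)=\widehat R$, then $\phi$ permutes the rulings and taking unions gives $\phi(S)=\bigcup_{\ell'\in\widehat R}\ell'=R$.

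The forward implication is where I expect the main obstacle to lie. Assuming $\phi(S)=R$, each image $\phi(\ell)$ of a ruling of $S$ is a line lying on $R$, so $\widehat\phi(\widehat S)$ is an irreducible (rational) curve in $\G$ all of whose points are lines contained in $R$. Here the degree hypothesis $d>2$ is essential: as noted above, a rational ruled surface of degree greater than two carries exactly one one-dimensional family of lines, namely its rulings, so any positive-dimensional family of lines lying on $R$ must be contained in $\widehat R$. Hence $\widehat\phi(\widehat S)\subseteq\widehat R$, and since $\widehat\phi(\widehat S)$ is a one-dimensional irreducible subvariety of the irreducible curve $\widehat R$, it cannot be a proper closed subset and must equal $\widehat R$. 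Combining the two implications yields the desired bijection; the only genuinely geometric input is the uniqueness of the ruling family in degree $d>2$, everything else being a formal consequence of $\pi$ being an isomorphism and of the surfaces being the unions of their rulings.
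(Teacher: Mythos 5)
Your proof is correct and follows the same route as the paper: the paper's own proof simply asserts that it is ``clear'' that the restriction carries $\aut(\p^3)_{S,R}$ onto $\aut^+(\G)_{\widehat{S},\widehat{R}}$ and then deduces bijectivity from $\pi$ being an isomorphism. You have merely supplied the details the paper leaves implicit, namely the equivalence $\phi(S)=R \iff \widehat{\phi}(\widehat{S})=\widehat{R}$, correctly identifying the uniqueness of the one-dimensional family of rulings in degree $d>2$ as the only genuinely geometric input.
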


\begin{proof}
  It is clear that the map indeed takes the set $\aut(\p^3)_{S,R}$ onto $\aut^+(\G)_{\widehat{S},\widehat{R}}$.  The bijectivity then follows from the fact that it is a restriction of the isomorphism.
\end{proof}

We saw that two-nondegenrate rational cubics in $\p^3$ were always projectively equivalent, whereas this was not true for quartics any more. Let us investigate the same question for rational ruled surfaces in $\p^3$, too. There are two possibilities: $S$ is either a projection of $\Sigma_{0,3}$ or $\Sigma_{1,2}$, where the first one is a cone  over a rational cubic. The theory of projective equivalences between cones is clearly equivalent to the theory of projective equivalences of planar curves. Two rational planar cubics are projectively equivalent  whenever they have equivalent their osculating polynomials, by Proposition~\ref{Prop:Telling}. Since a general rational cubic has its osculating polynomial with three distinct roots (the case of a~nodal cubic) we conclude that two generic projections of $\Sigma_{0,3}$ are projectively equivalent.

The generic projection of $\Sigma_{1,2}$ contains a chain of four special lines, see \cite{Pi05} for numeric formulas for the degree of singular locus, number of torsal lines, etc. The singular locus of $S$ is a line $\Gamma$. Through each point of $\Gamma$ there pass two rulings, except of two pinch points $\f g_i\in\Gamma$, $i=1,2$, where there is only one ruling $L_i$ counted with the~multiplicity two (the so called \emph{torsal ruling}). Except of the singular line there exists  another unique  line $\av{M}$, not belonging to the family of rulings on the surface $S$. Namely it is the projection of the~line $(1:s:0:0:0)$ on $\Sigma_{1,2}$. Let $\f m_i$ denote the intersection points $\av{M}\cap\av{L}_i$.  The lines $\Gamma$ and $M$ are skew.

Since a projective transformation of $\p^3$ is given by $5$ points there exists a transformation such that
\begin{equation}
\f g_1=[1:0:0:0],\ \f g_2=[0:1:0:0],\ \f m_1=[0:0:1:0]\ \text{and}\ \f m_2=[0:0:0:1].
\end{equation}
The surface is then obtained by joining corresponding points on lines $M$ and $
\Gamma$. W.l.o.g. parameterize  $M$ homogeneously as $s_1\f m_1+s_2\f m_2$, i.e., the points $\f m_i$ correspond to parameter values $s_j=0$ for $i\not=j$. The double curve is traced twice and thus it admits a parameterization $\f g(s_1,s_2)=g_1(s_1,s_2)\f g_1+g_2(s_1,s_2)\f g_2$ for some quadratic forms $g_i(s_1,s_2)$.  The fact that the points $\f g_i$ lie on torsal rulings means that these points are pinch points on the double line $\Gamma$ and the parameterization $\f g(s_1,s_2)$ fails to be regular at these points. Together with the conditions on compatibility with parameterization of $M$ we arrive at the  possible parameterizations $\alpha s_1^2\f g_1+\beta s_2^2\f g_2$, where $(\alpha:\beta)\in\p^1$.

To sum up we just constructed a family of ruled surfaces admitting a parameterization (written non-homogeneously)
\begin{equation}
   [s:1:\alpha s^2 t:\beta t];
\end{equation}
Nevertheless it is easy to see that all such surfaces are projectively equivalent. Let us just mention that this analysis shows that the group of automorphisms of a generic projection of $\Sigma_{1,2}$ is two-dimensional and has two disconnected components. The  first one is formed by transformations preserving all the lines $M$, $\Gamma$ and $L_i$, whereas the second component swaps two torsal rulings $L_i$.

Closer look at the group of automorphisms of ruled cubic reveals that there exist non-trivial transformations which preserve each ruling. To imagine this, consider at the moment a cylinder in the affine space. It is obviously invariant under translations with the direction of its axis. These transformations do not interchange the rulings of the surface.  In other words if $\widehat{\phi}\in\aut^+(\G)$ is the associated transformation and $\widehat{S}$ the curve on the Grassmannian then $\widehat{\phi}$ is the  identity when restricted to $\widehat{S}$. For a rational ruled surface we define
\begin{equation}
  \mathcal{N}_\av{S}=\left\{\phi\in\aut(\p^3)_S:\ \widehat{\phi}|_{\widehat{S}}=\mathrm{id}\right\}.
\end{equation}
The reason for the existence of non-trivial ${\cal N}_S$ is the fact that the curve $\widehat{S}$ can be contained in the subspace of dimension less than $5$.  Let us write $\mathrm{span}(\widehat{S})$ for the smallest subspace containing $\widehat{S}$. If $\phi\in\mathcal{N}_S$ then the projective transformation $\widehat{\phi}$ must be the identity on the whole subspace $\mathrm{span}(\widehat{S})$.  We will briefly discuss properties of the group $\mathcal{N}_S$ in dependence on the dimension of this subspace.

\paragraph{$\dim\mathrm{span}(\widehat{S})=5$}
In this case any $\widehat{\phi}$ which is identity on $\widehat{S}$ must be the~identity on the whole space $\p^5$.  Therefore $\mathcal{N}_S=\{\mathrm{id}\}$.

\paragraph{$\dim\mathrm{span}(\widehat{S})=4$}
Now, $\mathrm{span}(\widehat{S})$ is a hyperplane in $\p^5$. The section of the $\G$ by a hyperplane is called a~\emph{linear complex}. For an introduction to the theory of linear complexes see e.g. \cite[Chapter 3]{PoWa01}. Here we will recall only some necessary notions. There exist two different kinds of complexes. Since $\G$ is a hyper-quadric in $\p^5$ it induces a correspondence between points and hyperplanes. It associates to each point its polar hyperplane and vice versa. If the point $\f p$ is contained in $\G$ then the section by the polar hyperplane is called \emph{singular complex}. Otherwise the complex is said to be \emph{regular}.

Let $H\subset\p^5$ be a hyperplane. A transformations $\mu\in\aut(\p^5)$ leaving all points of $H$ invariant is called \emph{perspective collineation} and there exists a point $\f p\in\p^5$ such that for each $\f x\in\p^5$ the triple $\f p$, $\f x$ and $\mu(\f x)$ is collinear -- see \cite[Theorem 1.1.9]{PoWa01}. The point $\f p$ is called a center. The additional requirement that the quadric $\G$ must be invariant under $\mu$ as well, forces $\f p$ to be the pole of the hyperplane w.r.t. $\G$.

Thus for a singular complex the only transformation preserving $\G$ and leaving each point of the hyperplane invariant is the identity.  There exists additional transformation except of identity in regular case, namely the reflection induced by $\G$ and $H$.  However it is known that such a transformation is induced by a  mapping from $\p^3$ to the dual space $\left(\p^3\right)^\vee$. In other words $\mu\not\in\aut^+(\G)$, see \cite[Section 3.1]{PoWa01} for the detailed discussion.  Hence we conclude that $N_{S}=\{\mathrm{id}\}$ in the case $\dim\mathrm{span}(\widehat{S})=4$, too.

\paragraph{$\dim\mathrm{span}(\widehat{S})=3$} In this case $\mathrm{span}(\widehat{S})\cap\G$ has dimension two and thus it the so called \emph{congruence}. The space polar to $\mathrm{span}(\widehat{S})$ is a line. If the line is contained in $\G$, then it corresponds to a pencil of lines passing through a point in $\p^3$. The polar space then intersects $\G$ in the set of lines intersecting each line in this pencil.  These are the lines which lie in the plane of the pencil or lines passing through the vertex.  Thus  ${\widehat{S}}\cap\G$ is irreducible and consists of union of two 2-planes; each of them corresponding to one type of lines. However $\widehat{S}$ is irreducible and thus it must be contained in exactly one of these planes, which is a contradiction with the assumption $\dim\mathrm{span}(\widehat{S})=3$.

Hence assume that the line polar to $\mathrm{span}(\widehat{S})$ is not contained in the Grassmannian.
Then it intersects it in two (not necessarily) distinct points. These two points (if distinct) correspond to lines $M,N\subset\p^3$ and the congruence in this case consists of their transversals.  Thus a ruled surface is then formed by a one-dimensional family of lines intersecting both $M$ and $N$. A detailed description of the linear of congruences can be found again in \cite[Section 3.2]{PoWa01}.

In this case the group ${\cal N}_S$ is not trivial any more. To see this let $H\subset\p^5$ be a space of dimension $3$ and $H^\perp$ its polar line w.r.t. $\G$. Choose two points in $H^\perp\backslash{\G}$ then a composition of two reflections induced by these points is a transformation preserving $\G$ and fixing every point of $H$. From above we know that the original transformation is a composition $\p^3\rightarrow(\p^3)^\vee\rightarrow\p^3$ and thus it is an element of $\aut(\p^3)$. Moreover since $H$ and $H^\perp$ span the whole space $\p^5$ in this case we can conclude that all these transformations can be naturally identified with projective automorphisms  $H^\perp\rightarrow H^\perp$ leaving the intersection $H^\perp\cap\G$ invariant.  Therefore we conclude that $\dim\mathcal{N}_S=1$ in this case.

\paragraph{$\dim\mathrm{span}(\widehat{S})=2$}
There are basically two options in this case. First, curve $\widehat{S}$ is the~section of $\G$ by the~plane $\mathrm{span}\,(\widehat{S})$, i.e., it is a conic section and thus the surface $S$ is a quadric. Second $\mathrm{span}\,(\widehat{S})$ is contained in $\G$. In this case the surface $S$ is either a plane or a cone. In the conical case the group of all projective transformations of $\p^3$ leaving the vertex invariant can be identified with a group of projective transformations of $\p^2$. Thus $\mathcal{N}_S$ is again non-trivial.

\paragraph{$\dim\mathrm{span}(\widehat{S})=1$}
Since the curve $\widehat{S}$ is a line, the surface $S$ must be a plane in $\PP^3$.

The cases $\deg S=1,2$ were excluded, and thus the only case with non-trivial group occurs for $\dim\mathrm{span}\,(\widehat{S})=3$ or for the cones. Since a generic rational curve of degree at least four is not contained in three-dimensional space, we see that most of surfaces of degree at least four possess the~trivial subgroup $\mathcal{N}_S$.

\subsection{Transformations of affine curves}\label{Sub:affine}

Another problem which can be reduced to the computation of projective equivalences between finite sets of points in $\PP^1$ is the detection of affine transformations mapping a planar curve $C$ to a planar curve $D$, i.e., the computation of $\aff_{C,D}$. Note that for rational curves $C$ and $D$ we could use methods from Subsection~\ref{sec:curves} to detect $\aut(\p^2)_{C,D}$. The affine transformations then form its subset preserving $\omega$.  Hence the curves are not assumed to be necessarily rational in this part. We only require that they are irreducible. Thus they are given by their irreducible defining polynomials $F(x_0,x_1,x_2)$ and $G(x_0,x_1,x_2)$ of degrees $d$.   Write $F(x_0,x_1,x_2)=F_d(x_1,x_2)+F_{d-1}(x_1,x_2)x_0+\cdots F_0(x_1,x_2)x_0^d$ and similarly for the form $G$. As the affine transformations between lines or conics are easy to find we will omit these cases and assume $d>2$.

The affine transformations are exactly the projective transformations $\PP^2\rightarrow\PP^2$ preserving the ideal line $\omega: x_0=0$ and thus the matrix representation of any affine transformation can be written as
\begin{equation}\label{eq:matrix}
\left(
\begin{array}{cccc}
a_{00}&\vline&0&0\\
\hline
a_{10}&\vline& a_{11}&a_{12}\\
a_{20}&\vline& a_{21}&a_{22}\\
\end{array}
\right),\qquad a_{00}\not=0\ \text{and}\ a_{11}a_{22}-a_{12}a_{21}\not=0.
\end{equation}

Write $\f A$ for the matrix $\left(a_{ij}\right)_{ij=1}^2$. The affine transformation acts on $\omega$ via $[x_1:x_2]^{\top}\mapsto \f A [x_1:x_2]^{\top}$.  This defines a group homomorphism $\mu:\aff\rightarrow\aut(\p^1)$.

\begin{theorem}\label{thm:good cand aff}
  Let $C$ and $D$ be curves as above. Then $\aut(\p^1)_{F_n,G_n}$ is a~good candidate set for $\aff$ where the inclusion map is given by the restriction $\iota=\mu|_{\aff_{C,D}}$.
\end{theorem}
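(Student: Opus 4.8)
The plan is to check the two requirements of a good candidate set for $\aff_{C,D}$: that $\iota=\mu|_{\aff_{C,D}}$ is a well-defined map into $\aut(\p^1)_{F_d,G_d}$ with finite fibres, and that the two sets have the same dimension. Throughout I write $F_d(x_1,x_2)=F(0,x_1,x_2)$ and $G_d$ for the leading forms (the forms called $F_n,G_n$ in the statement), whose roots on $\omega$ are the points at infinity of $C$ and $D$.

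First I would establish well-definedness. Take $\phi\in\aff_{C,D}$; by \eqref{eq:matrix} its matrix $\overline{\f A}$ preserves $\omega$ and its restriction to $\omega\cong\p^1$ is exactly $\psi:=\mu(\phi)$, represented by the block $\f A=(a_{ij})_{i,j=1}^2$. Since $C,D$ are irreducible of the same degree $d$ and $\phi$ is a projective isomorphism, $\phi(C)=D$ forces the polynomial identity $G\circ\phi=\lambda F$ for some $\lambda\in\C^*$. Substituting $x_0=0$ and using $\overline{\f A}(0,x_1,x_2)^\top=(0,\f A(x_1,x_2)^\top)$, the left side becomes $G(0,\f A(x_1,x_2)^\top)=G_d\circ\psi$ and the right side becomes $\lambda F_d$. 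Hence $G_d\circ\psi=\lambda F_d$, i.e.\ $\psi\in\aut(\p^1)_{F_d,G_d}$ by \eqref{eq:equiv_forms}, so $\iota$ is well defined.

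Next I would control the fibres. The elements with $\mu(\phi)=\mathrm{id}$ are those with scalar $\f A$, i.e.\ the dilations (homotheties and translations) fixing $\omega$ pointwise; these form the kernel $N$ of $\mu$, and fixing one preimage $\phi_0$ identifies the fibre of $\iota$ over $\psi$ with $N\cap\aff_C$. This intersection is finite: its identity component would contain a one-parameter group of dilations, whose orbits are lines, which by B\'ezout would force a line to be a component of $C$, contradicting irreducibility with $d>2$. Thus $\iota$ is finite-to-one. Note it need not be \emph{injective} despite the $\hookrightarrow$ notation, since for a centrally symmetric curve both the identity and the central symmetry $\mathrm{diag}(1,-1,-1)$ restrict to $\mathrm{id}$ on $\omega$; finite-to-one is all the reduction needs.

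Finally I would match dimensions. By Lemma~\ref{lem:num of aut}, when $\aut(\p^1)_{F_d,G_d}$ is non-empty and finite it has the cardinality of $\aut(\p^1)_{F_d}$, and the latter is finite exactly when $F_d$ has at least three distinct roots, i.e.\ when $C$ meets $\omega$ in at least three distinct points (the stabiliser of two or fewer points in $\aut(\p^1)$ is positive-dimensional). Under this transversality condition the candidate set is finite, and since $\iota$ has finite fibres $\aff_{C,D}$ is finite as well, so both have dimension $0$ and the candidate set is good. The main obstacle is precisely this dimension count in the degenerate cases: if $C$ has fewer than three distinct points at infinity (e.g.\ $y=x^d$, giving $F_d=x_1^d$ and a two-dimensional $\aut(\p^1)_{F_d}$ against a smaller $\aff_{C}$), the candidate set becomes positive-dimensional and ceases to be good; so the genuine content is the generic case of $d$ distinct points at infinity, and proving finiteness of $\aff_{C,D}$ for arbitrary (possibly singular) irreducible $C$ without this assumption is the step requiring the most care.
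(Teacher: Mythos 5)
Your proposal follows the same overall strategy as the paper's proof --- restrict an affine map to the ideal line $\omega$, observe that its restriction lies in $\aut(\p^1)_{F_d,G_d}$, and then control how much is lost in passing to $\omega$ --- but you are more careful at precisely the two points where the paper's own argument is defective, and both of your objections are genuine. First, the paper deduces finiteness of the candidate set from the assertion that an irreducible curve of degree $d>2$ has ``exactly $d$ intersections'' with $\omega$; that count is with multiplicity, and as your example $y=x^d$ (or a cuspidal cubic) shows, $F_d$ may have a single root of multiplicity $d$, in which case $\aut(\p^1)_{F_d}$ is two-dimensional while $\aff_C$ is only one-dimensional, so the candidate set fails to be good in the paper's own sense. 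The tacit hypothesis you isolate --- at least three distinct points at infinity --- is really needed for the theorem as stated. Second, the paper claims $\iota$ is injective because an element of $\ker\mu$ (a translation or a homothety $x\mapsto\lambda x+b$) ``can only preserve a line''; this is true for translations and for homotheties of infinite order, but fails for finite-order ones such as the central symmetry $\mathrm{diag}(1,-1,-1)$, which lies in $\ker\mu$ and preserves, e.g., the irreducible cubic $x^3+y^3+x=0$. Your replacement --- identify the fibre of $\iota$ over $\psi$ with a coset of the algebraic group $N\cap\aff_C$, and exclude positive dimension because any one-parameter subgroup of $N$ has orbits contained in lines, which by B\'ezout would force a line component of $C$ --- is correct and is the honest version of the paper's argument; it yields finite-to-one rather than injective, which is all the reduction to a finite candidate set actually requires (and in the algorithm the ambiguity is resolved anyway when one solves for the translation part). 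In short, where your proof makes positive claims it is right, and the caveats you flag are not gaps in your argument but weaknesses in the theorem's formulation that the paper's two-line proof glosses over.
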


\begin{proof}
  Any affine transformation between $C$ and $D$ must map the ideal points of $C$ to the ideal points of $D$. Hence its restriction to $\omega$ can be naturally viewed as a transformation from $\aut(\p^1)_{F_d,G_d}$.  Since $C$ and $D$ are irreducible there exist exactly $d$ intersections of each curve with $\omega$ and thus $\aut(\p^1)_{F_d,G_d}$ is finite whenever $d>2$. Thus in order to show that it is a good candidate set it remains to prove that $\iota$ is injective.  Assume a contradiction. Let there exists two different transformations $\phi_1,\phi_2\in\aff_{C,D}$ which are mapped to the same transformation in $\aut(\p_1)$. It is easy to see that in this case $\phi_2\circ\phi_1^{-1}$ is a translation or scaling in $\aff$. However the only irreducible algebraic curves invariant under these transformations are lines.
\end{proof}

Now, let be given $\phi \in\aut(\p^1)_{F_d,G_d}$, i.e., there exists a regular matrix $\f A$ and $\lambda\in\C^*$ such that $G_d(\f A (x_1,x_2)^{\top})=\lambda F_d(x_1,x_2)$.  In order to find its preimage given by $\overline{\f A}$, cf. \eqref{eq:matrix}, in $\aff_{C,D}$ under $\iota$ it is enough to compute $a_{00}$, $a_{10}$ and $a_{20}$  such that $G(\overline{\f A}(x_0,x_1,x_2)^{\top}) = \lambda F(x_0,x_1,x_2)$. This leads to a system of polynomial equations in $a_{00}$, $a_{10}$ and $a_{20}$.  Although it might seem to be complicated, we know that the system has no or exactly one solution depending on the existence of the preimage.  In addition, writing $G(\overline{\f A} (x_0,x_1,x_2)^{\top})={G'_d}(x_1,x_2)+G'_{d-1}(x_1,x_2)x_0+\cdots G'_0(x_1,x_2)x_0^d$ one  can show that the subsystem of equations corresponding to $G'_{d-1}(x_1,x_2)=\lambda F_{d-1}(x_1,x_2)$ is linear.  Moreover it can be easily seen that for curves in general position this system has a unique solution and thus we can avoid solving non-linear systems.

\begin{algorithm}[H]
\caption{Affine equivalences of planar curves.}\label{alg:aff curves} \algsetup{indent=2em}
\begin{algorithmic}[1]
\REQUIRE Curves $C: F(x_0,x_1,x_2)=0$ and $D: G(x_0,x_1,x_2)=0$, both of degree $d$.

\STATE Compute $\aut(\p^1)_{F_d,G_d}$, cf. Algorithm~\ref{alg:finite poly}.

\STATE For each $\f A\in\aut(\p^1)_{F_d,G_d}$ construct a matrix $\overline{\f A}$, cf.\eqref{eq:matrix}, with $ a_{11}, a_{12}, a_{21}, a_{22}$ given by $\f A$ and $a_{00}, a_{10}, a_{20}$ as free parameters.

\STATE Set $G'(x_0,x_1,x_2)=G(\overline{\f A} (x_0,x_1,x_2)^{\top})$.

 \IF{the linear system corresponding to $G'_{d-1}(x_1,x_2) = F_{d-1}(x_1,x_2)$ has a solution}
   \STATE
     include map corresponding to the matrix $\overline{\f A}$, i.e., the solution $a_{00}, a_{10}, a_{20}$ together with $\f A$, into $\aff_{C,D}$.
   \ENDIF

\ENSURE $\aff_{C,D}$.
\end{algorithmic}
\end{algorithm}

\begin{example}\rm
Consider two algebraic curves $C$ and $D$ of degree $5$ given by the forms
\begin{multline}
F=-23 x_0^5-109 x_1 x_0^4-7 x_2 x_0^4-179 x_1^2 x_0^3+5 x_2^2 x_0^3-54 x_1 x_2 x_0^3-22 x_1^3
   x_0^2
\\
-4 x_2^3 x_0^2-6 x_1 x_2^2 x_0^2-40 x_1^2 x_2 x_0^2+70 x_1^4 x_0-2 x_2^4 x_0-12 x_1
   x_2^3 x_0-28 x_1^2 x_2^2 x_0
\\
-28 x_1^3 x_2 x_0+49 x_1^5+x_2^5+5 x_1 x_2^4+2 x_1^2 x_2^3-6
   x_1^3 x_2^2+13 x_1^4 x_2.
\end{multline}
and
\begin{multline}
G= x_0^5-2 x_1 x_0^4+x_2 x_0^4+x_1^2 x_0^3+9 x_2^2 x_0^3+4 x_1 x_2 x_0^3+10 x_1^3 x_0^2+42 x_2^3
   x_0^2
\\
+65 x_1 x_2^2 x_0^2+41 x_1^2 x_2 x_0^2+10 x_1^4 x_0+63 x_2^4 x_0+139 x_1 x_2^3 x_0+128
   x_1^2 x_2^2 x_0
\\
+57 x_1^3 x_2 x_0 +2 x_1^5+31 x_2^5+87 x_1 x_2^4+102 x_1^2 x_2^3+61 x_1^3
   x_2^2+18 x_1^4 x_2.
\end{multline}
By computing the automorphisms $\aut(\p^1)_{F_d,G_d}$ of the forms
 \begin{equation}
F_d = 49 x_1^5+13 x_2 x_1^4-6 x_2^2 x_1^3+2 x_2^3 x_1^2+5 x_2^4 x_1+x_2^5,
\end{equation}
\begin{equation}
G_d = 2 x_1^5+18 x_2 x_1^4+61 x_2^2 x_1^3+102 x_2^3 x_1^2+87 x_2^4 x_1+31 x_2^5
\end{equation}
we arrive at
\begin{equation}
\f A = \left(
\begin{array}{cc}
 -4 & -2 \\
 3 & 1 \\
\end{array}
\right)
\end{equation}
When setting $G'(x_0,x_1,x_2)=G(\overline{\f A}(x_0,x_1,x_2)^{\top})$, the condition
$G'_{d-1}(x_1,x_2) = F_{d-1}(x_1,x_2)$ leads to the system of linear equations 
\begin{equation}
\begin{array}{rclclcr}
 a_{00}&-&5 a_{10}&-&5 a_{20}&=&-2, \\
 -6 a_{00}&-&28 a_{10}&-&36 a_{20}&=&-12, \\
 -4 a_{00}&-&18 a_{10}&-&30 a_{20}&=&-28, \\
 62 a_{00}&+&44 a_{10}&+&76 a_{20}&=&-28,\\
 139 a_{00}&+&103 a_{10}&+&219 a_{20}&=&70, \\
\end{array}
\end{equation}
which has the following solution
\begin{equation}
a_{00} =  -2,\quad a_{10} =  -3,\quad a_{20} = 3.
\end{equation}
Altogether we obtain the resulting transformation described by the matrix
\begin{equation}
\overline{\f A} = \left(
\begin{array}{ccc}
 -2 & 0 & 0 \\
 -3 & -4 & -2 \\
 3 & 3 & 1 \\
\end{array}
\right)
\end{equation}
which maps $C$ to $D$.
\end{example} 

\subsection{Similarities and symmetries of surfaces}

In this section we solve the problem of the detection of direct similarities of algebraic surfaces $R$ and $S$, i.e, finding $\simm_{R,S}$, by the computation of projective equivalences between finite sets of points in $\PP^1$. Again we will employ Algorithm \ref{alg:finite poly}.

The real algebraic surfaces $R,S$  are given as real solutions of polynomial equations $F(x_0,x_1,x_2,x_3)=0$ and $G(x_0,x_1,x_2,x_3)=0$, where $F,G$ are typically defined over $\Q$ or its finite extension. We make a~natural assumption that the polynomials $F,G$ are irreducible over $\C$ and that $\dim_\R R=\dim_\R S=2$. Since the degree of a surface is a projective invariant, we assume that both surfaces have the same degree. In addition we have (see e.g. \cite{AlHe16} for a more detailed analysis):

\begin{proposition}
If $R,S$ are not both cylinders, cones or surfaces of revolution then $\simm_{R,S}$ is finite.
\end{proposition}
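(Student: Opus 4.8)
The plan is to reduce the two–surface statement to a one–surface statement and then to classify the one–parameter subgroups of $\simm$ that can preserve an irreducible algebraic surface. First I would dispose of the trivial case: if $\simm_{R,S}=\emptyset$ there is nothing to prove. Otherwise I fix some $\phi_0\in\simm_{R,S}$ and note, exactly as in the bijection underlying Lemma~\ref{lem:num of aut}, that $\psi\mapsto\phi_0\circ\psi$ is a bijection $\simm_R\to\simm_{R,S}$, and likewise $\simm_{R,S}\cong\simm_S$. Hence $\simm_{R,S}$ is infinite precisely when both $\simm_R$ and $\simm_S$ are. It will therefore suffice to prove the following one–surface claim: \emph{if the symmetry group $\simm_R$ of an irreducible surface is infinite, then $R$ is a cylinder, a cone, or a surface of revolution.} Granting this, infinitude of $\simm_{R,S}$ would force both $R$ and $S$ to be of this type, contrary to hypothesis, so $\simm_{R,S}$ must be finite.

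To prove the one–surface claim I would pass to an infinitesimal generator. Since $\simm_R$ is a real algebraic subgroup of $\simm$ (it is the stabiliser of the algebraic set $R$ inside the similarity group), being infinite it has positive dimension, so its identity component contains a nontrivial one–parameter subgroup $g_t=\exp(t\f v)$, where $\f v$ is an infinitesimal similarity $\f v(\f x)=\mu\f x+\f A\f x+\f b$ with $\mu\in\R$, $\f A$ skew–symmetric and $\f b\in\R^3$. Writing $F$ for the irreducible defining polynomial of $R$, invariance $g_t(R)=R$ gives $F\circ g_t=\lambda(t)F$ with $\lambda$ a continuous homomorphism $\R\to\R^{>0}$, so $\lambda(t)=\e^{\kappa t}$ for a real constant $\kappa$. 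Differentiating at $t=0$ I obtain the single polynomial identity
\begin{equation}\label{eq:inf}
 (\mu\f x+\f A\f x+\f b)\cdot\nabla F=\kappa F,
\end{equation}
and the whole argument reduces to determining which nonzero $\f v$ admit a polynomial solution $F$ of \eqref{eq:inf} defining an honest surface.

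A nonzero $\f v$ falls into five types according to $(\mu,\f A,\f b)$. The three easy (``pure'') cases are immediate: if $\mu=0$ and $\f A=\f 0$ (so $\f b\neq\f 0$) the flow is a translation and $R$ is a cylinder; if $\mu=0$, $\f A\neq\f 0$ and $\f b\in\operatorname{im}\f A$, the flow is a rotation about an axis and $R$ is a surface of revolution; if $\mu\neq0$ and $\f A=\f 0$, the unique fixed point of $\f v$ exhibits it as a homothety, so $R$ is a cone. These are exactly the three exceptional families. What remains are the genuinely mixed types — screw motions ($\mu=0$, $\f A\neq\f 0$, $\f b\notin\operatorname{im}\f A$) and spiral similarities ($\mu\neq0$, $\f A\neq\f 0$) — and I must show that they too force $R$ into one of the exceptional families.

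The hard part will be precisely these two cases. For both I would choose coordinates so that the axis of $\f A$ is the $x_3$–axis and pass to the complex eigencoordinates $z_\pm=x_1\pm\I x_2$, in which $\f A$ acts with eigenvalues $\pm\I\omega$, $\omega\neq0$, and $z_+z_-=x_1^2+x_2^2$. In the spiral case $\mu\f I+\f A$ is invertible (its eigenvalues are $\mu,\mu\pm\I\omega$), so translating to its fixed point removes $\f b$ and \eqref{eq:inf} becomes $((\mu\f I+\f A)\f x)\cdot\nabla F=\kappa F$; this operator is diagonal on monomials $x_3^a z_+^p z_-^q$ with eigenvalue $(a+p+q)\mu+(p-q)\I\omega$, so forcing every such eigenvalue to equal the real number $\kappa$ yields $p=q$ and $a+2p$ constant. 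Thus $F$ is weighted–homogeneous in $x_3$ and $x_1^2+x_2^2$, i.e. an ordinary homogeneous polynomial, and $R$ is a cone. In the screw case I may take $\f b=\beta\f e_3$ with $\beta\neq0$, and \eqref{eq:inf} reads $\bigl(\I\omega(z_+\partial_{z_+}-z_-\partial_{z_-})+\beta\partial_{x_3}\bigr)F=\kappa F$; decomposing $F=\sum_w F_w$ by the rotation weight $w=\deg_{z_+}-\deg_{z_-}$, which $\partial_{x_3}$ preserves, gives $\beta\,\partial_{x_3}F_w=(\kappa-\I\omega w)F_w$ for each $w$. Since a nonzero polynomial is never an eigenfunction of $\partial_{x_3}$ with nonzero eigenvalue, I get $\kappa-\I\omega w=0$; reality of $\kappa$ and $\omega\neq0$ then force $w=0$ and $F_w$ independent of $x_3$, so $F$ is a polynomial in $x_1^2+x_2^2$ alone, and irreducibility leaves only the circular cylinder $x_1^2+x_2^2-r^2$. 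In every case $R$ is a cylinder, a cone, or a surface of revolution. The decisive obstacle is thus the exclusion of genuine helical and loxodromic algebraic surfaces, which rests on the reality of $\kappa$ together with the absence of polynomial eigenfunctions of $\partial_{x_3}$ for nonzero eigenvalue.
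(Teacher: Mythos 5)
Your proof is correct, but there is nothing in the paper to compare it against: the proposition is stated as a known fact with a pointer to \cite{AlHe16}, and no internal proof is given. What you supply is a legitimate self-contained substitute. The reduction $\simm_{R,S}\cong\simm_R\cong\simm_S$ by composing with a fixed element is the same trick as Lemma~\ref{lem:num of aut}; the passage to an infinitesimal generator is sound because $\simm_R$ is Zariski-closed in $\simm$ (here you implicitly use the standing hypotheses that $F$ is irreducible with $\dim_\R R=2$, so that, by Lemma~\ref{lem C vs R}, $\phi(R)=R$ is equivalent to the algebraic condition that $F\circ\phi^{-1}$ be proportional to $F$ --- worth saying explicitly). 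I checked the eigenvalue computations in the two mixed cases: the derivation does act on $z_+^pz_-^qx_3^a$ with eigenvalue $\mu(a+p+q)+\I\omega(p-q)$ in the spiral case, and the screw case correctly reduces to $\beta\,\partial_{x_3}F_w=(\kappa-\I\omega w)F_w$, which kills all weights $w\neq 0$ since $\kappa$ is real and $\partial_{x_3}$ has no nonzero polynomial eigenfunctions with nonzero eigenvalue. The only cosmetic gaps are in the pure cases, where you assert rather than argue that flow-invariance of the zero set yields a cylinder, cone or surface of revolution (e.g.\ for translations one should note $\lambda\equiv 1$ by comparing leading forms, so $F$ itself is invariant); these are one-line remarks. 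Your route buys something the citation does not: an explicit, elementary classification of the one-parameter similarity subgroups that can preserve an irreducible real surface, which in particular rules out genuinely helical or loxodromic algebraic surfaces.
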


There exist efficient algorithms for recognizing surfaces invariant under translations (cylinders), scalings (cones) and a~one parameter set of rotations (surfaces of revolution) and we assume that $R$ and $S$ are not surfaces of these types.


Write $R_{\C}$ and $S_{\C}$ for  the zero locus of $F$ and $G$ in $\p^3_{\C}$.
The group $\simm$ acts naturally on $\PP^3_\C$ and it allows to consider also $\simm_{R_{\C},S_{\C}}$. Since any $\phi\in\simm_{R,S}$ maps the~real points of $R$ to the~real points of $S$ we obtain $\simm_{R_{\C},S_{\C}}\subset\simm_{R,S}$. Nevertheless the inclusion may be proper. The following lemma legitimizes our assumptions on the surfaces.

\begin{lemma}\label{lem C vs R}
 Let $R$ and $S$ be irreducible surfaces with $\dim_\R R=\dim_\R S=2$ then $\simm_{R_{\C},S_{\C}}=\simm_{R,S}$.
\end{lemma}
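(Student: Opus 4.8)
The plan is to prove the inclusion $\simm_{R,S}\subset\simm_{R_\C,S_\C}$, since the reverse inclusion $\simm_{R_\C,S_\C}\subset\simm_{R,S}$ was already established in the text. So let $\phi\in\simm_{R,S}$ be a direct similarity, meaning $\phi$ maps the real point set $R$ onto the real point set $S$. I want to conclude that the same transformation, now regarded as acting on $\PP^3_\C$, maps the complex zero locus $R_\C$ onto $S_\C$. The crucial point is that $\phi$ is defined by a real matrix (a similarity is an affine transformation with $\f A^\top\f A=\lambda\f I$, $\lambda\in\R^{>0}$, and in particular has real entries), and therefore $\phi$ commutes with complex conjugation on $\PP^3_\C$.

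First I would reformulate the condition on polynomials. Since $\phi$ has a real matrix representation, the pullback $G\circ\phi$ is a polynomial with real coefficients (after clearing the projective scalar), of the same degree $d$ as $F$. The hypothesis $\phi(R)=S$ says that $G\circ\phi$ vanishes on the real zero set $R=\{F=0\}\cap\PP^3_\R$, which is assumed to have real dimension $2$, i.e.\ it is a genuine real hypersurface and not something degenerate. The heart of the argument is then a \emph{real Nullstellensatz}-type or \emph{reality} statement: an irreducible real polynomial $F$ whose real zero set has the full dimension $2$ generates a real radical ideal, so any real polynomial vanishing on the real variety $\{F=0\}$ must be a real-polynomial multiple of $F$. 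Hence $G\circ\phi=\rho\cdot F$ for some $\rho\in\R[x_0,\dots,x_3]$; comparing degrees (both $G\circ\phi$ and $F$ have degree $d$ because $\phi$ is a projective automorphism preserving degree) forces $\rho$ to be a nonzero real constant.

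Once I have $G\circ\phi=\rho\, F$ as an identity of complex polynomials, it is immediate that $\phi$ maps the complex zero locus $R_\C=\{F=0\}\subset\PP^3_\C$ into $S_\C=\{G=0\}$: if $F(\f x)=0$ then $G(\phi(\f x))=\rho\,F(\f x)=0$. Applying the same reasoning to $\phi^{-1}$ (which is again a real similarity mapping $S$ to $R$) gives the reverse inclusion $\phi^{-1}(S_\C)\subset R_\C$, so $\phi(R_\C)=S_\C$ and therefore $\phi\in\simm_{R_\C,S_\C}$. This yields $\simm_{R,S}\subset\simm_{R_\C,S_\C}$ and, combined with the inclusion already in the text, the desired equality.

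The main obstacle is justifying the step $G\circ\phi=\rho\, F$, i.e.\ that a real polynomial vanishing on the real variety of an irreducible $F$ with $\dim_\R R=2$ is divisible by $F$ over $\R$. This is exactly where the hypotheses $\dim_\R R=\dim_\R S=2$ and irreducibility are used: they guarantee the real variety is Zariski-dense in the complex variety $R_\C$, so that vanishing on the real points forces vanishing on all of $R_\C$, whence divisibility by $F$ follows from irreducibility of $F$ over $\C$ together with the fact that $F$ has (up to constant) real coefficients. I would make this precise by invoking that for an irreducible $F$ whose real locus is a hypersurface of dimension $n-1$, the complex variety $\{F=0\}$ equals the Zariski closure of its real points; the condition $\dim_\R R=2$ is precisely what rules out the pathological case where $F$ is irreducible over $\C$ but has a lower-dimensional (or empty) real locus, such as a sum of squares, for which the claim would fail.
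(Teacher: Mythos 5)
Your proposal is correct and follows essentially the same route as the paper: the paper's own proof is a one-line appeal to exactly the fact you isolate, namely that when $\dim_\R R=2$ there is (up to a constant factor) a unique irreducible polynomial vanishing on the real locus, so $G\circ\phi$ must be a constant multiple of $F$ and the complex zero loci are carried onto one another. You merely spell out the justification of this fact (real Nullstellensatz / Zariski density of the real points) in more detail than the paper does.
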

\begin{proof}
This follows directly from the fact that there is (up to a~constant factor) a unique irreducible polynomial vanishing exactly on $R_\R$ whenever $\dim_\R R=2$.
\end{proof}


Lemma~\ref{lem C vs R}  enables us to replace the~real surface $R$ by the complex one. Let us write
\begin{equation}
  F(0,x_1,x_2,x_3)=(x_1^2+x_2^2+x_3^2)^k\widetilde{F}(x_1,x_2,x_3),
\end{equation}
where $\sum x_i^2$ does not divide $\widetilde{F}$.
Then we set $R_\Omega$ to be the intersection of the absolute conic $\Omega$ and the curve $\widetilde{F}=0$ in the plane at infinity, together with the intersection multiplicities. Hence set theoretically  $R_\Omega=\overline{(R_\C\cap\omega)\backslash\Omega}\cap\Omega$ and it is a finite subset of $\Omega$. Since every similarity $\phi$ preserves the~absolute conic  we have $\phi(R_\Omega)=R_\Omega$. The conic $\Omega$ is a smooth rational curve and thus there exists an~isomorphism $\mu:\PP^1_\C\rightarrow\Omega$, for example it can be given by
\begin{equation}
   [s:t]\mapsto [0:2st:s^2-t^2:\I(s^2+t^2)].
\end{equation}
The pull-back $\mu^*\widetilde{F}$ is a form of degree $2\deg R$ on $\PP^1$ such that its zero-set is exactly the pre-image of the set $R_\Omega$ together with the~multiplicities. Analogously, we obtain $\mu^*\widetilde{G}$ for the surface $S$.

\begin{theorem}
Let $R$ and $S$ be surfaces as above. If $\widetilde{F}$ and $\widetilde{G}$ are not constants,  then $\aut(\p^1)_{\mu^*\widetilde{F},\mu^*\widetilde{G}}$ is a~good candidate set for $\simm_{R,S}$ where the inclusion map is given by the restriction $\iota=\mu|_{\simm_{R,S}}$.
\end{theorem}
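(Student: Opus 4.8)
The plan is to verify the two defining properties of a good candidate set: that $\iota$ is a well-defined injective map $\simm_{R,S}\hookrightarrow\aut(\p^1)_{\mu^*\widetilde F,\mu^*\widetilde G}$, and that the candidate set has the same dimension as $\simm_{R,S}$, namely $0$. Throughout I identify $\mu$ with the induced homomorphism $\nu:\simm\to\aut(\p^1)$ sending $\phi$ to $\mu^{-1}\circ\phi|_\Omega\circ\mu$; this is legitimate because every similarity preserves the absolute conic $\Omega$ and hence restricts to an automorphism of $\Omega\cong\p^1$.

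First I would check well-definedness. Take $\phi\in\simm_{R,S}$; by Lemma~\ref{lem C vs R} it preserves the complex loci, so $G\circ\phi=cF$ for some $c\in\C^*$. Since $\phi$ is affine it preserves $\omega$ and acts there through its linear part $\f A$, which satisfies $\f A^{\top}\f A=\lambda\f I$, whence $\sum(\f A\f x)_i^2=\lambda\sum x_i^2$. Substituting the factorizations $F(0,\f x)=(\sum x_i^2)^{k}\widetilde F$ and $G(0,\f x)=(\sum x_i^2)^{k'}\widetilde G$ into the restriction $G(0,\f A\f x)=cF(0,\f x)$ and comparing the multiplicity of the irreducible factor $\sum x_i^2$ (which divides none of $\widetilde F$, $\widetilde G$, or $\widetilde G(\f A\,\cdot\,)$, the last because $\f A$ is an invertible similarity) forces $k'=k$ and leaves the polynomial identity $\widetilde G(\f A\f x)=(c/\lambda^{k})\widetilde F(\f x)$. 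Pulling this back along $\mu$ yields $\mu^*\widetilde G\circ\nu(\phi)=(c/\lambda^{k})\,\mu^*\widetilde F$, so $\nu(\phi)\in\aut(\p^1)_{\mu^*\widetilde F,\mu^*\widetilde G}$, as required.

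Injectivity I would obtain from the standing finiteness of $\simm_R$. If $\nu(\phi_1)=\nu(\phi_2)$ for $\phi_1,\phi_2\in\simm_{R,S}$, then $\psi:=\phi_2^{-1}\circ\phi_1\in\simm_R$ lies in $\ker\nu$, i.e.\ its linear part fixes $\Omega$ pointwise. As $\Omega$ is a smooth conic it contains four points of $\omega\cong\p^2$ in general position, so that linear part must be scalar and $\psi$ is a homothety--translation $\f x\mapsto c\f x+\f b$ with $c>0$. A nontrivial such map has infinite order, and its powers form pairwise distinct elements of $\simm_R$, contradicting the finiteness of $\simm_R$ guaranteed by the proposition recalled above (valid since $R$ is neither a cone, cylinder, nor surface of revolution). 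Hence $\psi=\mathrm{id}$ and $\iota$ is injective.

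It remains to see that the candidate set is finite, which is where I expect the genuine obstacle to lie. Since $\widetilde F,\widetilde G$ are non-constant, the forms $\mu^*\widetilde F,\mu^*\widetilde G$ are non-constant and their root sets $A=\mu^{-1}(R_\Omega)$ and $B=\mu^{-1}(S_\Omega)$ are finite and non-empty. Because $\aut(\p^1)_{\mu^*\widetilde F,\mu^*\widetilde G}\subseteq\aut(\p^1)_{A,B}$, Lemma~\ref{lem:num of aut} reduces finiteness to the claim that $\#A=\#R_\Omega\ge 3$, i.e.\ that the residual curve $\{\widetilde F=0\}$ meets the absolute conic in at least three points. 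If it met $\Omega$ in at most two points, the stabilizer of $R_\Omega$ inside $\aut(\Omega)$ would be positive-dimensional; this forced continuous symmetry at infinity is precisely the signature of a surface of revolution (respectively a cone in the one-point borderline), so ruling it out is exactly what the standing exclusion of surfaces of revolution, cones and cylinders provides, as established in the surface-of-revolution analysis of \cite{AlHe16}. Granting $\#R_\Omega\ge 3$, the candidate set is finite, hence $0$-dimensional just like $\simm_{R,S}$, and the three properties together show it is a good candidate set with inclusion $\iota=\mu|_{\simm_{R,S}}$.
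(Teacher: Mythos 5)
Your first two steps are correct and in fact considerably more careful than the paper's own proof, which consists of a single sentence deferring to the proof of Theorem~\ref{thm:good cand aff} and noting that non-injectivity of $\iota$ would force $R$ to be a cylinder or a cone. Your well-definedness argument (comparing multiplicities of the irreducible factor $x_1^2+x_2^2+x_3^2$ to get $k'=k$ and the scalar $c/\lambda^k$) fills in details the paper omits, and your injectivity argument is a legitimate variant: where the paper implicitly uses that a surface invariant under a nontrivial translation or homothety is a cylinder or cone, you use that such a map has infinite order while $\simm_R$ is finite. Both rest on the same standing exclusions.

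The genuine gap is in the finiteness step, exactly where you predicted it. The inference ``$\#R_\Omega\le 2$ implies a positive-dimensional stabilizer of $R_\Omega$ in $\aut(\Omega)$, which is the signature of a surface of revolution or cone'' is false: a continuous symmetry of the finite set $R_\Omega$ at infinity is a property of the \emph{candidate} set, not of $\simm_R$, and it does not propagate to a continuous symmetry of the surface itself --- conflating the two undoes the very distinction the notion of candidate set is built on. Concretely, take $F=x_1(x_1^2+x_2^2+x_3^2)x_0^0+(\text{generic lower-order terms})\cdot x_0$, homogenized to a generic irreducible cubic with leading form $x_1(x_1^2+x_2^2+x_3^2)$. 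Then $\widetilde{F}=x_1$ is non-constant, $R_\Omega$ consists of two points, $\aut(\p^1)_{\mu^*\widetilde{F},\mu^*\widetilde{F}}$ is one-dimensional, yet the surface is not a cylinder, cone or surface of revolution and $\simm_R$ is finite --- so the candidate set is \emph{not} good in the paper's sense. (A smaller issue: Lemma~\ref{lem:num of aut} presupposes finiteness and cannot be used to derive it; the correct tool is that the stabilizer of a set of at least three points of $\p^1$ injects into a symmetric group.) To be fair, the paper's proof is silent on this point as well: in Theorem~\ref{thm:good cand aff} finiteness came for free because an irreducible curve of degree $d>2$ meets $\omega$ in $d\ge 3$ points, and no analogous count is available here. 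The honest repair is to strengthen the hypothesis from ``$\widetilde{F},\widetilde{G}$ not constants'' to ``$\mu^*\widetilde{F}$ and $\mu^*\widetilde{G}$ each have at least three distinct roots'' (equivalently $\#R_\Omega,\#S_\Omega\ge 3$), after which your argument closes; as written, the step fails.
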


\begin{proof}
The proof is analogous to the proof of Theorem~\ref{thm:good cand aff} with the specification that $\iota$ is not injective only for cylinders and cones, which were excluded from our considerations.
\end{proof}

%
%

Let $\phi\in\aut(\p^1)_{\mu^*\widetilde{F},\mu^*\widetilde{G}}$, then $\psi=\mu\circ\phi\circ\mu^{-1}$ is an automorphism of $\Omega$ mapping the set $R_\Omega$ to $S_\Omega$.
Since $\psi$ is an automorphism of a conic section in $\PP^2$ there exists a projective transformation $\Psi$ of $\PP^2$ preserving $\Omega$ such that $\Psi|_{\Omega}=\psi$. Write $\f A$ for a matrix representing $\Psi$. We can determine the remaining coefficients $a_{00}, \ldots, a_{03}$ of the matrix $\overline{\f A}$ of the possible similarity. In particular we solve the system of linear equations analogously as in Section \ref{Sub:affine}. The method is summarized in the Algorithm~\ref{alg:sim surfaces}.

%

\begin{algorithm}[H]
\caption{Similarities of algebraic surfaces.}\label{alg:sim surfaces}\algsetup{indent=2em}
\begin{algorithmic}[1]
\REQUIRE Surfaces $R: F(x_0,x_1,x_2,x_3)=0$ and $S: G(x_0,x_1,x_2,x_3)=0$, both of degree $d$.

\STATE Compute $\aut(\p^1)_{\mu^*\widetilde{F},\mu^*\widetilde{G}}$, cf. Algorithm~\ref{alg:finite poly}.

\STATE For each $\phi\in\aut(\p^1)_{\mu^*\widetilde{F},\mu^*\widetilde{G}}$ construct a projective transformation $\Psi$ (a matrix $\f A$) of $\PP^2$ preserving $\Omega$ such that $\Psi|_{\Omega}= \mu\circ\phi\circ\mu^{-1}$.

\STATE For each such $\f A$ construct a matrix $\overline{\f A}$, cf.\eqref{eq:matrix form}, given by $\f A$, $\widehat{\f a}=(0,\ldots,0)$ and $a_{00}, \ldots, a_{30}$ as free parameters.

\STATE Set $G'(x_0,x_1,x_2,x_3)=G(\overline{\f A} (x_0,x_1,x_2,x_3)^{\top})$.

 \IF{the linear system corresponding to $G'_{d-1}(x_1,x_2,x_3) = F_{d-1}(x_1,x_2,x_3)$ has a solution}
   \STATE
     include map corresponding to the matrix $\overline{\f A}$ (i.e., the solution $a_{00}, \ldots, a_{30}$ together with~$\f A$) into $\simm_{S,R}$.
   \ENDIF

\ENSURE $\simm_{S,R}$.
\end{algorithmic}
\end{algorithm}

Let us remark, that a self-similarity of an algebraic surface is immediately an isometry, hence  $\simm_{S}$ determines a group of symmetries. Since we assume that the surface is neither a cylinder, a cone or a surface of revolution, we know that $\simm_{S}$ is trivial, cyclic, dihedral or a group of symmetries of a platonic solid, cf. Proposition \ref{prp:platonic}.

We conclude this section by visualising the candidate group of the surface. This is achieved by the following construction of the map between $\Omega$  and the unit sphere $\mathcal{S}^2 \subset \R^3$. The point from $\p^3$ is contained in $\Omega$ if and only if it has coordinates $[0:\f p]$ such that
$\re(\f p)\cdot\im(\f p)=0$ and $|\re(\f p)|=|\im(\f p)|$. Consider a mapping $\gamma:\Omega\rightarrow\mathcal{S}^2$ defined by
\begin{equation}
  \gamma:\,\f p\mapsto\frac{\I}{\,\f p\cdot\overline{\f p}}(\f p\times\overline{\f p})=\frac{\re(\f p)\times\im(\f p)}{|\re(\f p)|\,|\im(\f p)|}.
\end{equation}
A rotation $\phi$ can be  viewed simultaneously as a~mapping $\phi:\Omega\rightarrow\Omega$ or $\phi:\mathcal{S}^2\rightarrow\mathcal{S}^2$. Since for a regular matrix $\f M$ and two vectors $\f a$, $\f b$ it holds $(\f M\f a)\times(\f M\f b)=\det \f M \cdot \f M^{-\top}(\f a\times\f b)$, the mapping $\gamma$ leads to a commutative diagram

\begin{equation}\label{eq diagram}
\begin{array}{c}
\xymatrix{
 \Omega  \ar[r]^{\phi}\ar[d]_{\gamma} & \Omega\ar[d]^{\gamma}\\
 \mathcal{S}^2\ar[r]^{\phi} & \mathcal{S}^2
}
\end{array}.
\end{equation}
For the sake of simplicity we assume that the surface $S$ intersects $\Omega$ with multiplicities one (otherwise we should consider also the multiplicities of the intersections and modify the approach accordingly). Denote $\Lambda:=\gamma(S_\Omega)$ the image of the intersections of the surface $S$ with the absolute conic on the unit sphere in $\R^3$. Let $0<d_1<\cdots <d_k<2$ be all the possible distances between points of $\Lambda$ (except of antipodal points) and write
\begin{equation}
\Lambda_i:=\left\{\f p\in\Lambda:\ \exists\f q\in\Lambda\ \text{such that}\ |\f p\,\f q|=d_i\right\},
\end{equation}
for $i\in\{1,\dots,k\}$. Since isometries preserve the distances we see that any isometry preserving $\Lambda$ must preserve each $\Lambda_i$. Conversely if every $\Lambda_i$ is preserved by some isometry, then the same is true for their union $\Lambda$.  And thus we have the following formula for the candidate group

\begin{equation}
  \aut(\p^1)_{\mu^*\widetilde{F}}\cong \bigcap_{i=1}^k\simm_{\Lambda_i}
\end{equation}

The procedure is illustrated in Fig.~\ref{fig eistuete} when we arrive at the  case $\simm_S = \bigcap_{i=1}^k\simm_{\Lambda_i}\simeq \mathcal{D}_4$. Furthermore, in Fig.~\ref{fig chair} we see a situation where $\simm_S \simeq {\cal T}$ is a proper subgroup of the candidate group  $\bigcap_{i=1}^k\simm_{\Lambda_i}\simeq {\cal O}$.

\begin{figure}[ht]
\begin{center}

  \includegraphics[width=0.22\textwidth]{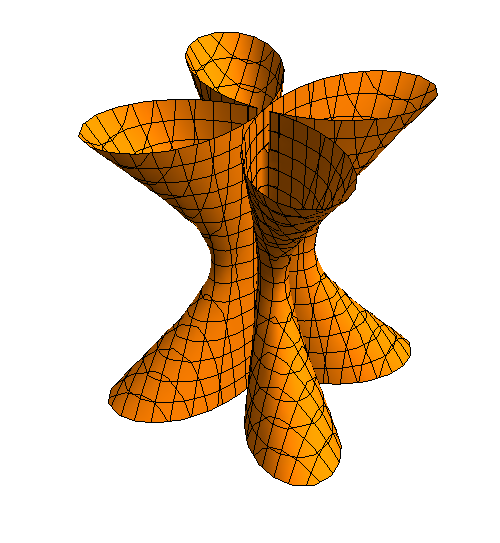}
  \includegraphics[width=0.22\textwidth]{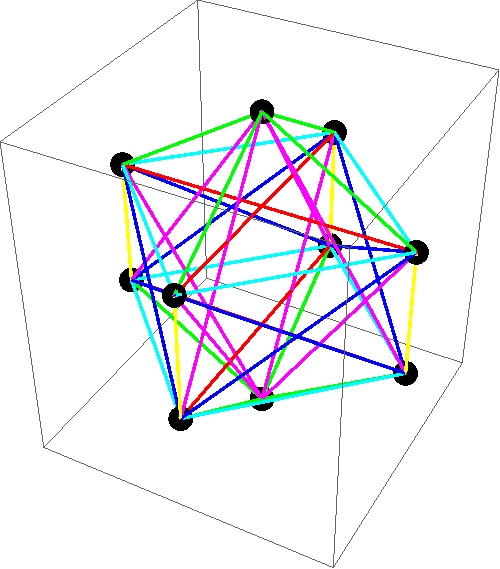}
  \includegraphics[width=0.22\textwidth]{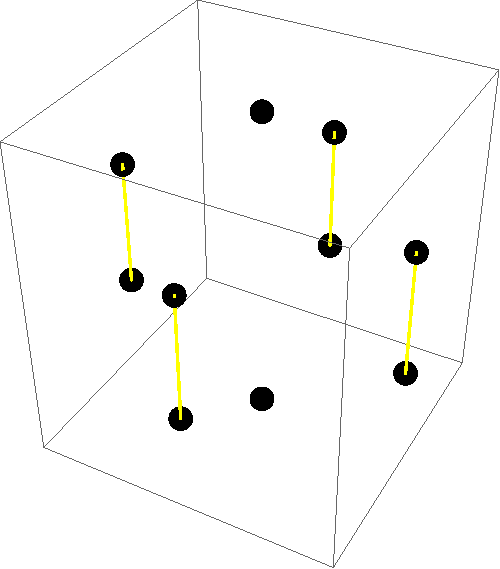}
  \includegraphics[width=0.22\textwidth]{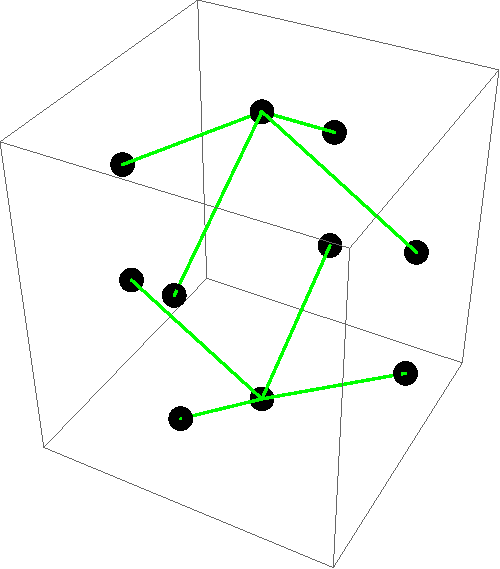}\\

  \includegraphics[width=0.22\textwidth]{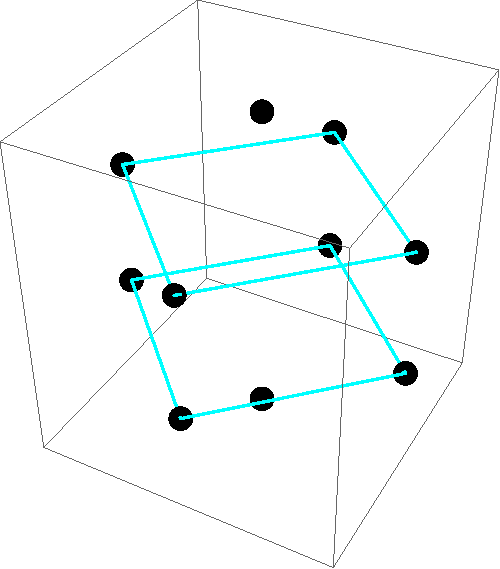}
  \includegraphics[width=0.22\textwidth]{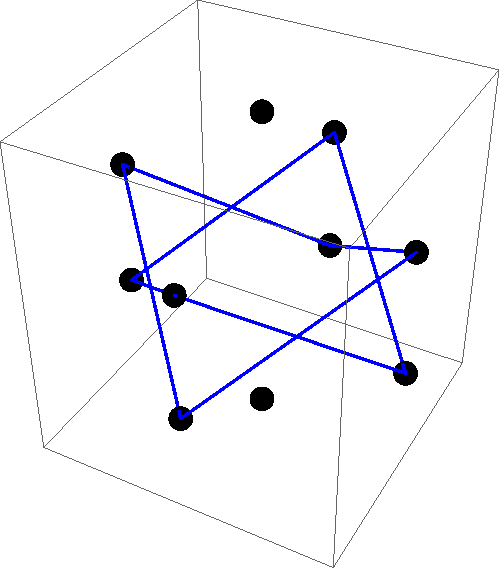}
  \includegraphics[width=0.22\textwidth]{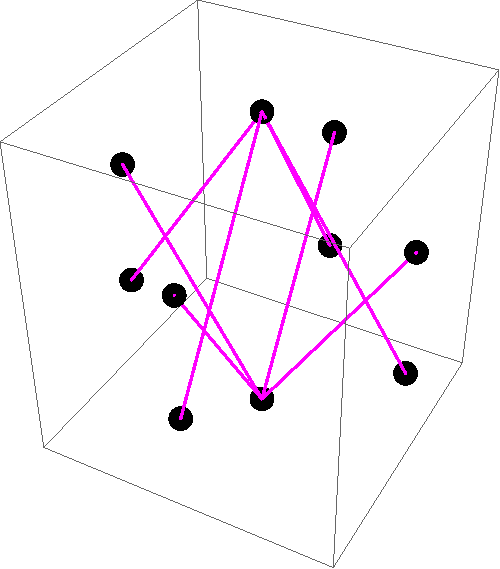}
  \includegraphics[width=0.22\textwidth]{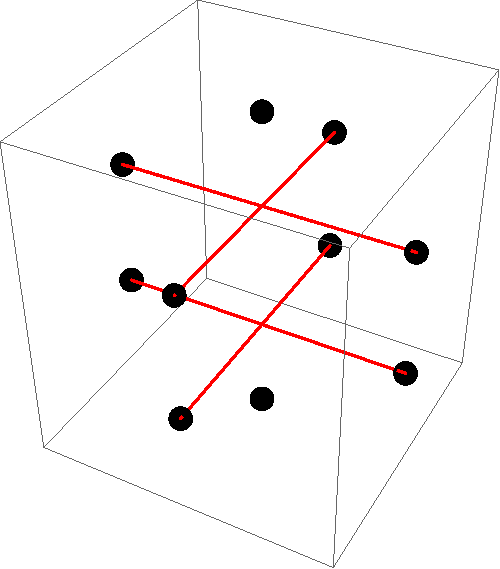}
\begin{minipage}{0.9\textwidth}
\caption{The surface $(x^2 + y^2)^3 - 4x^2y^2(z^2 + 1)$ with $\simm_S \simeq {\cal D}_4$. \label{fig eistuete}}
\end{minipage}
\end{center}
\end{figure}

\begin{figure}[ht]
\begin{center}

  \includegraphics[width=0.22\textwidth]{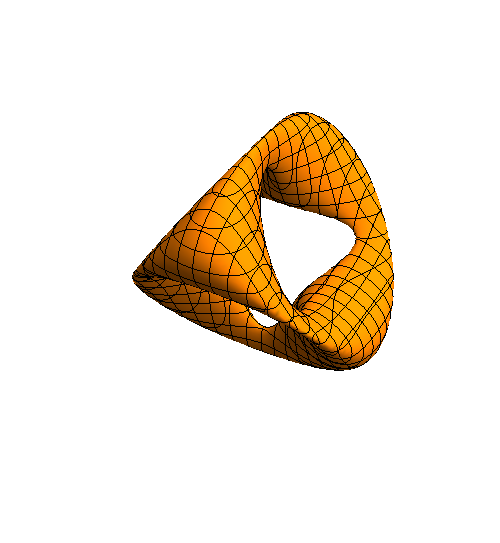}
  \includegraphics[width=0.22\textwidth]{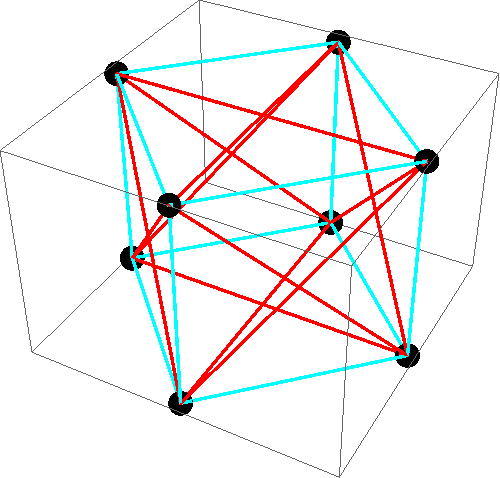}
  \includegraphics[width=0.22\textwidth]{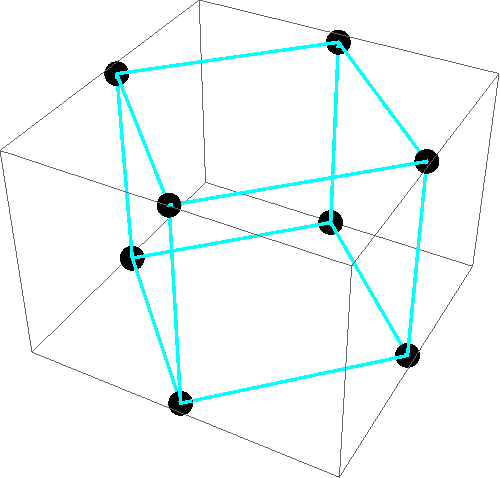}
  \includegraphics[width=0.22\textwidth]{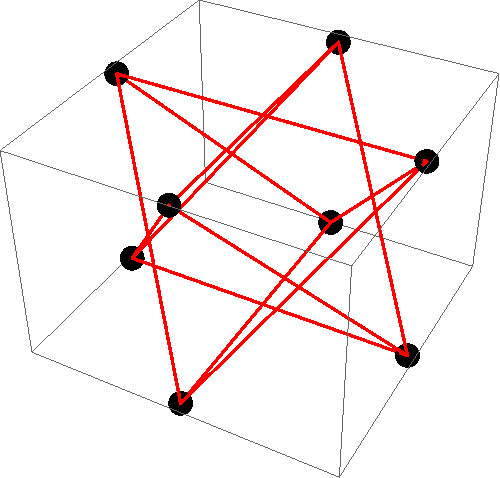}

\begin{minipage}{0.9\textwidth}
\caption{{\em Chair surface} with the tetrahedral symmetry $\simm_S \simeq {\cal T}$. \label{fig chair}}
\end{minipage}
\end{center}
\end{figure}

\section{Conclusion}

An identification of  a suitable class of equivalencies in  given geometric data is a topic interesting not only from the theoretical but also from the practical point of view. From this reason, computing projective equivalences of distinguished algebraic varieties has become an active research area and various situations are incessantly investigated. And as direct computations (although they can be formulated easily) are getting quite complicated even for trivial inputs, alternative efficient approaches are still required and investigated.

In this paper, we studied several situations that  can be transformed to determining equivalences of finite subsets of the projective line. This makes the designed method computationally  suitable e.g. for finding  projective equivalences of rational curves, determining projective equivalences of rational ruled surfaces, detecting affine transformations between planar algebraic curves, and computing similarities between two implicitly given algebraic surfaces.  The designed algorithms were implemented in the CAS Mathematica and their functionality was documented on several examples.

\section*{Acknowledgments}
\noindent
The authors were supported by the project LO1506 of the Czech Ministry of Education, Youth and Sports.

%
%

\bibliographystyle{elsarticle-harv}
\bibliography{bibliography,bibliographyML,new}

\end{document}